\title{Symmetric Promise Constraint Satisfaction Problems: Beyond the Boolean Case}
\author{Libor Barto}{Department of Algebra, Faculty of Mathematics and Physics, Charles University, Czechia }{libor.barto@gmail.com}{https://orcid.org/0000-0002-1825-0097}{}
\author{Diego Battistelli}%name
{Department of Algebra, Faculty of Mathematics and Physics, Charles University, Czechia}%institution
{diego\_ew@yahoo.it}%email - optional
{}%orcid - optonal
{}%funding
\author{Kevin M. Berg}%name
{Department of Algebra, Faculty of Mathematics and Physics, Charles University, Czechia}%institution
{berg.kevinm@gmail.com}%email - optional
{https://orcid.org/0000-0002-1555-4239}%orcid - optonal
{}%funding
\authorrunning{L. Barto, D. Battistelli, K.\,M. Berg}
\keywords{constraint satisfaction problems, promise constraint satisfaction, Boolean PCSP, polymorphism} %TODO mandatory; please add comma-separated list of keywords
\newtheorem{conjecture}[theorem]{Conjecture}
\theoremstyle{claimstyle}
\newcommand{\rel}[1]{\mathbf{#1}}
\newcommand{\vc}[1]{\mathbf{#1}}
\newcommand{\CSP}{\mathrm{CSP}}
\newcommand{\PCSP}{\mathrm{PCSP}}
\newcommand{\ar}{\mathrm{ar}}
\newcommand{\sel}{\mathrm{sel}}
\newcommand{\sym}{\mathrm{sym}}
\begin{document}

\maketitle
\begin{abstract}
   
The Promise Constraint Satisfaction Problem (PCSP) is a recently introduced vast generalization of the Constraint Satisfaction Problem (CSP). We investigate the computational complexity of a class of PCSPs beyond the most studied cases -- approximation variants of satisfiability and graph coloring problems. We give an almost complete classification for the class of PCSPs of the form: given a 3-uniform hypergraph that has an admissible 2-coloring, find an admissible 3-coloring, where admissibility is given by a ternary symmetric relation. The only PCSP of this sort whose complexity is left open in this work is a natural hypergraph coloring problem, where admissibility is given by the relation ``if two colors are equal, then the remaining one is higher.'' 
\end{abstract}

\section{Introduction}\label{sec:Intro}

The Constraint Satisfaction Problem (CSP) over a finite relational structure $\rel A$ (also called a \emph{template}), denoted $\CSP(\rel A)$, can be defined as a homomorphism problem with a fixed target structure. In the decision version of the problem, an instance is a finite relational structure $\rel X$ (of the same signature as $\rel A$) and the problem is to decide whether there exists a homomorphism (i.e., a relation-preserving map) from $\rel X$ to $\rel A$. In the search version of the problem, we are required to find such a homomorphism whenever it exists. Many computational problems, including various versions of logical satisfiability, graph coloring, and systems of equations can be represented in this form, see the survey~\cite{BKW17}. For example, the CSP over $\rel{K}_3$ (the clique on 3 vertices) is the 3-coloring problem, the CSP over the two-element structure  with all the ternary relations is the 3-SAT problem, the CSP over the two-element structure $\rel{1in3}$ with a single ternary relation $\{(0,0,1),(0,1,0),(1,0,0)\}$ is the positive 1-in-3-SAT, and the CSP over the two-element structure $\rel{NAE}$ with a single ternary relation $\{0,1\}^3 \setminus \{(0,0,0),(1,1,1)\}$ is the positive not-all-equal-3-SAT.

An early general complexity classification result was Schaeffer's dichotomy theorem for Boolean (i.e., two-element) templates~\cite{Sch78}: each such CSP is in P or is NP-complete. Another influential result was a dichotomy theorem by Hell and Ne\v set\v ril~\cite{HN90} for CSPs over (undirected) graphs. Motivated in part by these two theorems, Feder and Vardi formulated their dichotomy conjecture stating that the P versus NP-complete dichotomy remains true for CSPs over arbitrary finite structures. This conjecture inspired a very active research program in the last 20 years, culminating in a celebrated positive resolution independently obtained by Bulatov~\cite{Bul17} and Zhuk~\cite{Zhu17}. Their proofs rely heavily on a general theory of CSPs developed in~\cite{Jea98,BJK05} (the so-called \emph{algebraic approach}) whose central concept is a \emph{polymorphism}, a multivariate function preserving the relations in the template. An NP-hardness criterion in terms of polymorphisms has been isolated in~\cite{BJK05} and was conjectured to be the only source of hardness -- all the templates that do not satisfy this criterion should be in P. This is what Bulatov and Zhuk confirmed in their work. 

This paper studies a recently introduced~\cite{AGH17,BG18} vast generalization of the fixed-template CSP, the \emph{Promise CSP} (PCSP). A promise template is a pair $(\rel A, \rel B)$ of finite relational structures such that $\rel A \to \rel B$, i.e., there exists a homomorphism from $\rel A$ to $\rel B$. The PCSP over $(\rel A,\rel B)$  is then (in the search version) the following problem: given a relational structure $\rel X$ such that $\rel X \to \rel A$ (this is the promise), find a homomorphism $\rel X \to \rel B$ (which is guaranteed to exist since $\rel A \to \rel B$). This framework generalizes that of CSP (where $\rel A = \rel B$) and also includes many important problems in approximation, e.g., if $\rel A = \rel K_{k}$ and $\rel B = \rel K_l$, $k \leq l$, then $\PCSP(\rel A,\rel B)$ is the problem of finding an $l$-coloring of a $k$-colorable graph, a problem whose complexity is open after more than 40 years of research. On the other hand, the basics of the algebraic approach to CSPs from~\cite{BOP18} can be generalized to PCSPs~\cite{BKO19,BBKO}, and this discovery gives us hope that a full complexity classification might be possible. 

Motivated by this ambitious goal, a line of research focuses on studying restricted classes of templates, the two main directions being graph templates and Boolean templates, mimicking the two ``base cases'' in the CSP. For the graph templates, a complexity conjecture has been formulated by Brakensiek and Guruswami in \cite{BG18} stating that all templates that are not in P for simple reasons ($\rel A$ or $\rel B$ has a loop or is bipartite) are NP-hard. Boolean templates form a rich source of examples and are the context where the PCSP framework was introduced~\cite{AGH17,BG18}. Boolean PCSPs are interesting both from the NP-hardness perspective and, unlike the graph templates, from the algorithmic perspective: a generalization of~\cite{BBKO} in \cite{BWZ20} brought the strongest NP-hardness criterion we currently have, which we will also employ in this paper, and the most general polynomial algorithm~\cite{BG20,BGWZ} also came from investigating Boolean PCSPs. The strongest classification result obtained so far in this direction is the dichotomy theorem over Boolean \emph{symmetric} templates, i.e., templates whose relations are all invariant under permutations of coordinates~\cite{BG18,FKOS19}.

Note that both undirected graphs and symmetric Boolean templates use symmetric relations -- undirected graphs consist of a single binary relation over a domain of an arbitrary size, symmetric Boolean templates consist of arbitrarily many symmetric relations of arbitrary arities over a two-element domain. The focus of this paper is somewhere in between these two extremes: we study templates $(\rel A, \rel B)$ where $\rel A$  consists of a single symmetric ternary relation over a two-element set. Our motivation was that the classification of more concrete templates of PCSPs is needed for improving the general theory, such as finding hardness and tractability criteria. The class we study is amenable for case analysis and, on the other hand, already includes important problems both from a hardness and an algorithmic perspective (e.g., $k$-coloring a $2$-colorable 3-uniform hypergraph or finding a positive not-all-equal solution to a satisfiable positive 1-in-3-sat instance).

Let $(\rel A, \rel B)$ be a PCSP template such that $\rel A$ has domain $\{0,1\}$ and a single symmetric ternary relation $R \subseteq \{0,1\}^3$, and let $\rel B$ consist of a single relation $R \subseteq B^3$ on a domain $B$. By taking into account the main result of~\cite{DRS05} and ruling out some trivial cases (see Section~\ref{sec:PandNP} for details), we are left with the case where $\rel A = \rel{1in3}$ and $R$ is also symmetric. 

Note that in this situation $\PCSP(\rel A,\rel B)$ has a hypergraph coloring interpretation: given a 3-uniform hypergraph that is $\rel{1in3}$-colorable (that is, each vertex can be assigned a color from \{0,1\} so that there is exactly one 1 appearing in each hyperedge), find a $\rel B$-coloring (that is, a coloring by $B$ such that the three colors appearing in each hyperedge are from $R$).%
\footnote{We commit a slight imprecision here, since the relation of the instance can contain entries with repeated coordinates, and thus not all instances correspond to 3-uniform hypergraphs. The difference is insignificant for our results.}

\subsection{Three-element domain}

The first non-Boolean domain size, $\vert B \vert =3$, already turns out to be interesting. Figure~\ref{fig:lattice} shows all the templates $\rel B$ ordered by the relation $\rel B \leq \rel B'$ if $\rel B \to \rel B'$ (if $\rel B \leq \rel B' \leq \rel B$, only one representative is drawn). As $\rel B \leq \rel B'$ implies that $\PCSP(\rel{1in3},\rel B')$ reduces to $\PCSP(\rel{1in3},\rel B)$, the higher the template is, the ``easier'' the corresponding PCSP is.
This fact and the notation of the templates are detailed in Sections~\ref{sec:prelim} and \ref{sec:PandNP}.

    \begin{figure}[t]
        \centering
        \includegraphics[width=0.5\textwidth]{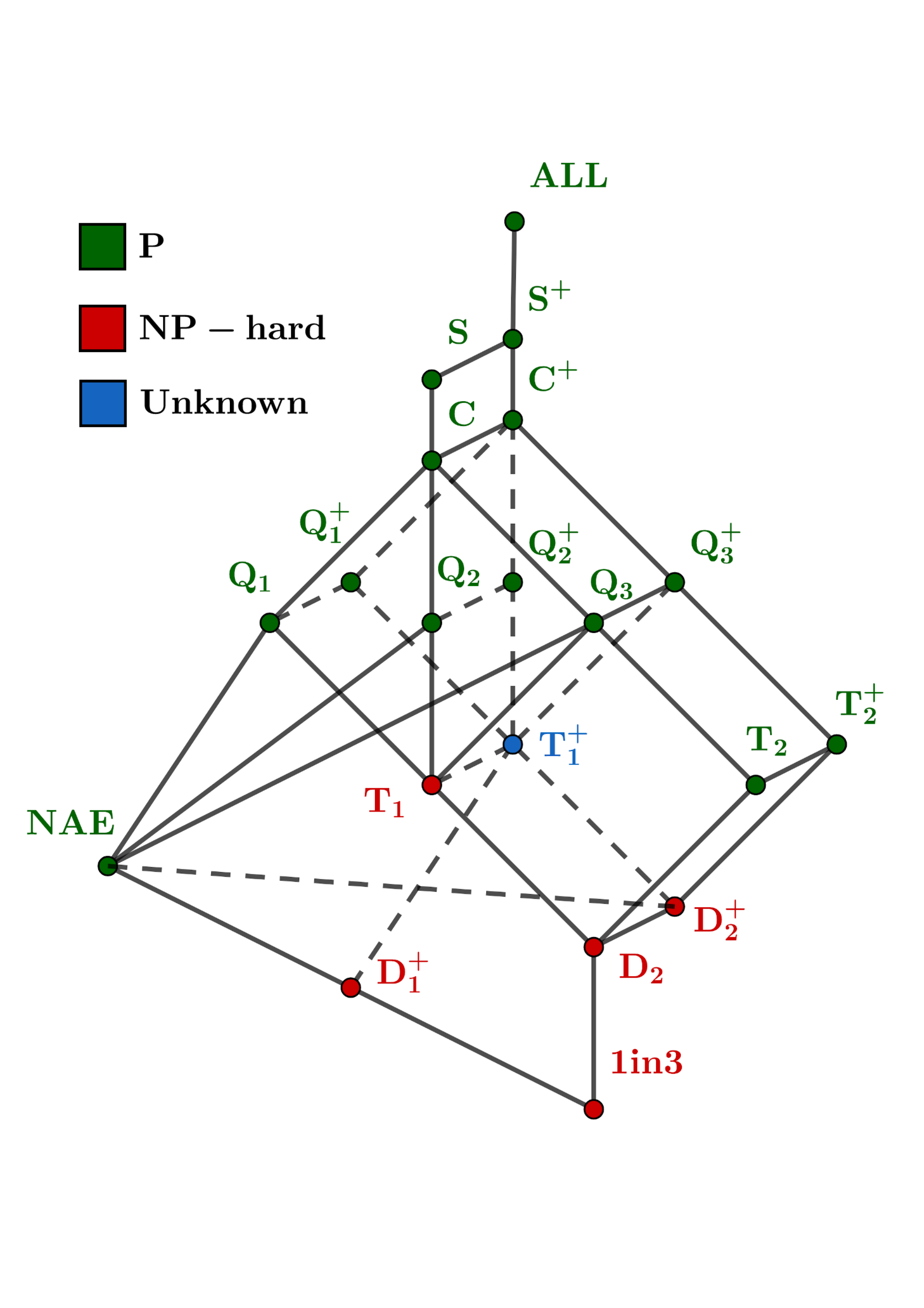}
        \caption{The Lattice of Homomorphism Classes of Symmetric Structures}
        \label{fig:lattice}
    \end{figure}

We were able to classify all but one case:

\begin{theorem}\label{thm:main}
Let $(\rel{1in3},\rel B)$ be a PCSP template, where $\rel B$ has domain-size three.
\begin{itemize}
    \item If $\rel{NAE} \to \rel B$ or $\rel T_2 \to \rel B$, then $\PCSP(\rel{1in3},\rel B)$ is in P.
    \item If $\rel B \to \rel T_1$ or $\rel B \to \rel D_1^+$ or $\rel B \to \rel D_2^+$, then $\PCSP(\rel{1in3},\rel B)$ is NP-hard.
\end{itemize}
\end{theorem}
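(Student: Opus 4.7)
The plan is to decompose the theorem into five base cases using the fundamental reduction principle stated in the excerpt: whenever $\rel B \to \rel B'$, we have $\PCSP(\rel{1in3}, \rel B') \leq_p \PCSP(\rel{1in3}, \rel B)$, obtained by post-composing any solution with the fixed homomorphism $\rel B \to \rel B'$. Thus moving upward in the lattice of Figure~\ref{fig:lattice} can only make the problem easier, and it suffices to establish polynomial-time algorithms for the two minimally-easy templates $\rel{NAE}$ and $\rel T_2$, together with NP-hardness for the three maximally-hard templates $\rel T_1$, $\rel D_1^+$, and $\rel D_2^+$. The remaining cases in the theorem statement are then immediate consequences of the reduction principle.

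For the tractability part, I would first dispose of the case $\rel{NAE} \to \rel B$ by reducing to $\PCSP(\rel{1in3}, \rel{NAE})$, which is a Boolean symmetric PCSP and therefore tractable by the dichotomy of~\cite{BG18,FKOS19} (concretely, a 1-in-3 satisfying assignment of an instance is itself a NAE assignment, giving a trivial transformation once one solves for the Boolean target via the standard algorithm). For the case $\rel T_2 \to \rel B$, the target has a genuinely three-element domain, so I would look for an explicit family of polymorphisms $\rel{1in3}^n \to \rel T_2$ that witnesses solvability by one of the known general-purpose algorithms: basic LP, affine integer programming (AIP), or their combination BLP+AIP from~\cite{BG20,BGWZ}. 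The most natural guess is a symmetric family (say, a majority-like or threshold-like operation of every arity) that implies BLP tractability; verifying that such polymorphisms exist and preserve the relation of $\rel T_2$ is the main content here.

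For the hardness part, I would treat each of $\rel T_1$, $\rel D_1^+$, $\rel D_2^+$ separately, using the polymorphism-based hardness criteria from~\cite{BKO19,BBKO,BWZ20}. The strategy for each is to describe, by case analysis, all polymorphisms $f : \rel{1in3}^n \to \rel B$, show that every such $f$ must essentially depend on a bounded number of coordinates (for instance, be a projection, a ``chunked'' operation, or close to one in the sense of~\cite{BWZ20}), and then invoke the appropriate hardness meta-theorem. Where convenient, I would instead give direct gadget reductions from a known NP-hard problem such as graph 3-coloring or $\PCSP(\rel{1in3}, \rel{1in3})$ (positive 1-in-3-SAT), exploiting the specific shape of the target relation to encode the source.

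The main obstacle will be the polymorphism analysis for the two ``plus'' targets $\rel D_1^+$ and $\rel D_2^+$: being more permissive than $\rel T_1$, they admit a richer polymorphism clone, so ruling out complex polymorphisms requires a careful symmetry-based case analysis on how coordinates of a symmetric polymorphism must be grouped. On the algorithmic side, identifying the right polymorphism certificate (or the right relaxation) for $\PCSP(\rel{1in3}, \rel T_2)$ is the most delicate step, since unlike the $\rel{NAE}$ case it does not reduce to a known Boolean problem and no off-the-shelf algorithm is guaranteed to apply.
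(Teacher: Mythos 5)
Your reduction skeleton is exactly the paper's: the homomorphism lattice collapses the theorem to tractability of $(\rel{1in3},\rel{NAE})$ and $(\rel{1in3},\rel{T}_2)$ plus NP-hardness of $(\rel{1in3},\rel{T}_1)$, $(\rel{1in3},\rel{D}_1^+)$, $(\rel{1in3},\rel{D}_2^+)$, and your treatment of the $\rel{NAE}$ case (cite the Boolean symmetric classification of~\cite{BG18,FKOS19}) is what the paper does. For $\rel{T}_2$ you overcomplicate and slightly misfire: the relation of $\rel{T}_2$ is $\{(x,y,z): x+y+z \equiv 1 \pmod 3\}$, so $\CSP(\rel{T}_2)$ is a system of linear equations over the three-element field and every template with $\rel{T}_2 \to \rel B$ is a homomorphic relaxation of it; no polymorphism hunt is needed. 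If you insist on the polymorphism route, note that an affine template is not certified by majority- or threshold-like operations and BLP alone: the witnesses are the sums $x_1+\dots+x_n \bmod 3$ for $n \equiv 1 \pmod 3$, which yield block-symmetric polymorphisms of unbounded width and hence tractability via the BLP+AIP theorem of~\cite{BG20,BGWZ}.

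The genuine gap is in the hardness part. Your plan --- show that every polymorphism ``essentially depends on a bounded number of coordinates'' and invoke a meta-theorem --- is precisely the kind of weaker criterion that the paper indicates is insufficient here, explicitly so for $\rel{D}_2^+$. These templates admit polymorphisms that are far from juntas; for instance, for $\rel{T}_1$ with $f(\emptyset)=0$ the value class of $f(X)$ is governed by the global parity $|X\cap E(f)| \bmod 2$, so no bounded set of essential coordinates exists. What the paper actually uses is the chain-of-minors criterion (Theorem~\ref{thm:wrochna}): one must assign to every polymorphism a set $\sel(f)$ of at most $k$ coordinates and prove that along any chain of minors of length $l$ two selected sets must intersect after pulling back. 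Constructing $\sel$ is the real work and differs per template: for $\rel{D}_1^+$ the existence of a 1-set or 2-set of size at most $3$ is extracted from the chromatic number of Kneser graphs (Lov\'asz's theorem), a topological ingredient your outline does not anticipate; for $\rel{D}_2^+$ and $\rel{T}_1$ it rests on union and parity arguments, and $\rel{D}_2^+$ genuinely needs chains of length $l=5$, i.e., the full multi-layer strength of the criterion rather than the $l=1$ case that suffices for Boolean symmetric templates. Your fallback of direct gadget reductions from 3-coloring or positive 1-in-3-SAT also does not work as stated: a reduction into $\PCSP(\rel{1in3},\rel B)$ must produce instances satisfying the promise $\rel X \to \rel{1in3}$ on yes-instances, which is exactly what standard NP-hardness gadgets fail to guarantee.
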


Even though $|B|=3$ is a small domain size, many interesting phenomena already show up, and we believe that the collection of templates is a valuable source of examples for further exploration. We now emphasize some of the phenomena and open questions.

Tractability of $\rel{NAE}$ and $\rel{T}_2$ can be obtained by using the sufficient condition from~\cite{BG20,BGWZ} -- the existence of block-symmetric polymorphisms of arbitarily large block-sizes. However, $\rel T_2$ is ``simpler'' in that one can use a finite-template CSP to solve it in polynomial-time, while for $\rel{NAE}$ no such finite template exists~\cite{Bar19}. Among the templates in Figure~\ref{fig:lattice}, where is the borderline for such a ``finite tractability''? 

Two features of our NP-hardness results are worth noting. First, they are obtained by applying the currently strongest NP-hardness condition from~\cite{BWZ20}. It seems (but we do not prove it here) that weaker conditions, which were sufficient for NP-hardness of symmetric Boolean CSPs~\cite{BG18,FKOS19} and NP-hardness of approximate hypergraph coloring~\cite{DRS05} (cf.~\cite{BBKO}), are not sufficient here, namely for $\rel{D}_2^+$. The only other such templates we are aware of are those from~\cite{BWZ20}. Second, NP-hardness of $\rel{D}_1^+$ has, similarly to~\cite{DRS05}, topological sources since it employs the high chromatic number of Kneser's graphs~\cite{Lov78} -- a result of Lov\'asz that started topological combinators. However, the application of Kneser's graphs is more direct in our situation, and this may help in further improving the topological methods in PCSPs. In particular, it would be desirable to find a common generalization of the proof in \cite{DRS05} and in the paper~\cite{KO19}, which employs algebraic topology in a seemingly different way. 

The unresolved case, $\rel{T}_1^+$, in fact corresponds to a natural hypergraph coloring problem that appears to be new: given a $\rel{1in3}$-colorable 3-uniform hypergraph, find a 3-coloring such that, in each hyperedge, if two colors are equal, then the third one is \emph{higher} (as opposed to ``different'' for the standard hypergraph coloring). We conjecture that this problem, as well as the natural generalization to larger domains, is NP-complete. If true, there is a unique source of hardness for our templates.

\subsection{Larger domains}

For a 4-element $B$, the above conjecture would resolve all the cases with the exception of the interval between $\rel{\check{C}}$ and $\rel{\check{C}}^+)$, where $\rel{\check{C}}$ is given by the relation containing the tuples  $(0,0,1)$, $(1,1,2)$, $(2,2,3)$, $(3,3,0)$ and their permutations,  and $\rel{\check{C}}^+$ is given by the same relation with all the ``rainbow'' tuples $(i,j,k)$ such that $|\{i,j,k\}| = 3$.%
\footnote{The notation is derived from the Czech word for a square -- \v ctverec.}

We are able to show that the bottom of the interval corresponds to an NP-hard PCSP, and the top one gives a template that does not satisfy the sufficient condition for tractability from~\cite{BG20,BGWZ}.

\begin{theorem}\label{thm:largerdomains}
$\PCSP(\rel{1in3},\rel{\check{C}})$ is NP-hard. The template $(\rel{1in3},\rel{\check{C}}^+)$ does not have a block symmetric polymorphism with two blocks of sizes 23 and 24. 
\end{theorem}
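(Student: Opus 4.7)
For the hardness of $\PCSP(\rel{1in3},\rel{\check{C}})$, my plan is to apply the polymorphism-based hardness criterion of~\cite{BWZ20}. The first step is to pin down the structure of polymorphisms $f\colon\{0,1\}^n\to\{0,1,2,3\}$: applied to the tuple $(\vec 1, \vec 0, \vec 0) \in \rel{1in3}^n$ (each column lies in $\rel{1in3}$), the condition forces $f(\vec 1) = f(\vec 0)+1 \pmod 4$; and because every tuple of $\rel{\check{C}}$ has exactly two equal coordinates, the same rigidity propagates to intermediate Hamming weights, so on every coordinate-wise $\rel{1in3}$-triple two of $f$'s outputs must coincide with the third at cyclic distance one modulo~$4$. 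The main step is then to verify that the resulting polymorphism minion fails the BWZ20 criterion, concretely by exhibiting a small forbidden pattern of minors that cannot be realised coherently given this $\mathbb{Z}/4$-rigidity. An alternative route, perhaps cleaner, is a direct gadget reduction from an already-known NP-hard template such as $\PCSP(\rel{1in3},\rel{D}_2^+)$ from Theorem~\ref{thm:main}, using the small hypergraph coloring interpretation of both problems.

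For the second assertion, a block-symmetric polymorphism $f\colon\{0,1\}^{47}\to\{0,1,2,3\}$ with blocks of sizes $23$ and $24$ corresponds to a function $g\colon\{0,\dots,23\}\times\{0,\dots,24\}\to\{0,1,2,3\}$, where $g(p_1,p_2)$ is the common output on all inputs with $p_1$ ones in the first block and $p_2$ ones in the second. The polymorphism condition becomes the combinatorial requirement that for every two compositions $p_1+q_1+r_1=23$ and $p_2+q_2+r_2=24$ into nonnegative integers, the triple $(g(p_1,p_2),g(q_1,q_2),g(r_1,r_2))$ lies in $\rel{\check{C}}^+$; the only forbidden output triples are the constants $(a,a,a)$ and the non-cyclic two-equal triples $(a,a,b)$ with $b-a\not\equiv 1 \pmod 4$. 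Assuming for contradiction that such a $g$ exists, I would exploit the asymmetry that $24=3\cdot 8$ while $23$ is not divisible by~$3$: fixing $p_2=q_2=r_2=8$ yields constraints on the row $g(\cdot,8)$ in which the bad diagonal $(k,k,k)$ never arises from the composition side, allowing one to track a $\mathbb{Z}/4$-valued winding of $g(\cdot,8)$ from $p_1=0$ to $p_1=23$. A symmetric argument on the column $g(0,\cdot)$, using configurations of the form $(0,0,23)$ for the first block and varying compositions of $24$ for the second, produces a second winding; comparing the two at the corner $(23,24)$ should force an inconsistency modulo~$4$.

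The main obstacle is the permissiveness of $\rel{\check{C}}^+$: rainbow tuples let $g$ evade the cyclic law whenever three outputs happen to be all distinct, so any naive winding computation risks being voided at any step. The technical heart of Part~2 is therefore to produce \emph{enough} constraining configurations that simultaneous rainbow escapes are impossible, trapping a large two-equal regime in which the $\mathbb{Z}/4$ constraints accumulate coherently into the desired contradiction; a Ramsey- or pigeonhole-type count on $4$-colorings of the $24\times 25$ grid is the natural tool here, and the specific sizes $23$ and $24$ are presumably the smallest at which such a count closes up. For Part~1, the parallel difficulty is to check that no subtle combination of minors sneaks past the BWZ20 criterion despite the constraints imposed by the cyclic structure of $\rel{\check{C}}$ -- the paper's remarks after Theorem~\ref{thm:main} warn that weaker hardness conditions do not suffice, so care is needed to invoke the full strength of~\cite{BWZ20}.
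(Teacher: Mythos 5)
Your proposal identifies the right high-level strategies for both parts, but in each case the actual mathematical content is missing, and in each case the step you leave open is precisely the hard part. For the NP-hardness of $\PCSP(\rel{1in3},\rel{\check{C}})$, invoking the criterion of~\cite{BWZ20} is correct (this is what the paper does), but that criterion does not ask you to ``exhibit a small forbidden pattern of minors'': it asks you to construct, for \emph{every} polymorphism $f$, a bounded set $\sel(f)$ of coordinates such that along any chain of minors of bounded length two of the selected sets must meet under the connecting maps. Your observation that every tuple of $\rel{\check{C}}$ has exactly two equal entries with the third one higher mod $4$ is a true and useful starting point, but by itself it does not produce such a selection. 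The paper has to prove genuine structural lemmas first: writing $i=f(\emptyset)$, there are no $(i+2)$-sets and no two disjoint $(i+1)$-sets; unions of $i$-sets are $i$-sets and unions of two disjoint $(i+3)$-sets are $(i+1)$-sets; and if there is no $(i+3)$-set of size at most $2$ then some singleton is an $(i+1)$-set. Only then can one define $\sel(f)$ (a small $(i+3)$-set if one exists, else a singleton $(i+1)$-set) and run a chain argument with $l=5$, using a pigeonhole over four type-1 polymorphisms whose selected sets, if pairwise disjoint, would union up to two disjoint $(i+1)$-sets. None of this is in your sketch, and the alternative ``gadget reduction from $\rel{D}_2^+$'' is pure speculation -- there is no homomorphic relaxation relating the two templates, so you would have to actually build and verify a gadget.

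For the second part, your reformulation as a function $g$ on the grid and your idea of fixing $p_2=q_2=r_2=8$ (exploiting $24=3\cdot 8$) is exactly the paper's first move: it yields a \emph{symmetric} polymorphism $f$ of $(\rel{1in3},\rel{\check{C}}^+)$ on $23$ coordinates, determined by a single value $f(m)$ for each $0\le m\le 23$. But from there the paper does not need any winding-number or Ramsey-type argument, and your own sketch concedes that the ``rainbow escape'' problem would void a naive winding computation; you do not resolve it. The resolution is simply to only ever use partitions $a+a+b=23$ or partitions in which two of the three parts are already known to get equal values -- then the rainbow tuples are irrelevant and the third value is forced to be one higher mod $4$. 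Concretely, assuming $f(8)=0$ one forces in turn $f(7)=1$, $f(9)=2$, $f(5)=3$, $f(13)=0$, $f(2)=1$, $f(14)=2$, $f(0)=3$, and then each of the four possible values of $f(6)$ is killed by one or two further forced values (e.g.\ $f(6)=3$ forces $f(11)=f(12)=0$, contradicting $f(0)=3$ via the partition $11+12+0$). Your proposed second ``column winding'' on $g(0,\cdot)$ via first-block configurations $(0,0,23)$ would not even generate two-equal constraints in general, since the relevant grid points $(0,p_2)$, $(0,q_2)$, $(23,r_2)$ need not carry two equal values. So the setup of Part~2 is right, but the contradiction itself -- the entire point of the statement -- is absent.
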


The theorem suggests that even for $|B|=4$, essentially the only tractable cases are $\rel{NAE}$ and $\rel{T}_2$. Is this the case on arbitrary domains?

\section{Basic definitions} \label{sec:prelim}
    
Throughout this paper, we adopt the convention that $[n]=\{1,2, \ldots , n\}$.

A \emph{relational structure} (of a finite signature) is a tuple $\rel A = (A; R_1, \dots, R_n)$, where $A$ is a set called the \emph{domain} of $\rel A$, and each $R_i$ is a relation of arity $\ar_i \geq 1$, that is, a nonempty subset of  $A^{\ar_i}$. A relational structure is \emph{symmetric} if each relation in it is invariant under any permutation of coordinates. 
Two relational structures $\rel A = (A; R_1, \dots, R_n)$ and $\rel B = (B; R'_1, \dots, R'_{n'})$ \emph{have the same signature} if $n=n'$ and each $R_i$ has the same arity as $R'_i$. In this situation, a mapping $f: A \to B$ is a homomorphism from $\rel A$ to $\rel B$, written $f: \rel A \to \rel B$, if it preserves the relations, that is, for each $i$ and each tuple $\vc{a} \in R_i$, we have $f(\vc{a}) \in R_i'$, where $f$ is applied to $\vc{a}$ component-wise. The fact that there exists a homomorphism from $\rel A$ to $\rel B$ is denoted by $\rel A \to \rel B$. 

\begin{definition}
    A \emph{PCSP template} is a pair of finite relational structures with the same signature, $\rel A$, $\rel B$ such that $\rel A \to \rel B$. We denote the PCSP template of $\rel A$ and $\rel B$ by $(\rel A, \rel B)$.
\end{definition}

For a given PCSP template, it is possible to define both a decision problem and a search problem variant of the PCSP.

\begin{definition}
    Let  $(\rel A, \rel B)$ be a PCSP template. The \emph{decision version of $\PCSP(\rel A, \rel B)$} is, given an input structure $\rel I$ with the same signature as $\rel A$ and $\rel B$, to output yes if $\rel I \to \rel A$ and no if $\rel I \not\to \rel B$.

The \emph{search version of $\PCSP(\rel A, \rel B)$} is, given an input structure $\rel I$ with the same signature as $\rel A$ and $\rel B$, to find a homomorphism $h: \rel I \to \rel B$.
\end{definition}

It is not hard to see that the decision version of $\PCSP(\rel A,\rel B)$ can be reduced to its search version. The tractability results in this paper apply to the search version (and hence also to the decision version), while NP-hardness results apply to the decision version (and hence also to the search version).

The following concept captures the situation when one PCSP can be reduced to another one by the trivial reduction, that is, the reduction that does not change the instance.

\begin{definition}
Let $(\rel A,\rel B)$ and $(\rel A',\rel B')$ be PCSP templates of the same signature. We say that $(\rel A',\rel B')$ is a homomorphic relaxation of $(\rel A,\rel B)$ if $\rel A' \to \rel A$ and $\rel B \to \rel B'$. 
\end{definition}

Observe that, indeed, the trivial reduction from $\PCSP(\rel A',\rel B')$ to $\PCSP(\rel A,\rel B)$ is correct if and only if $(\rel A',\rel B')$ is a homomorphic relaxation of $(\rel A,\rel B)$.

A crucial notion for the algebraic approach to PCSP is a \emph{polymorphism}. A polymorphism of a template is simply a homomorphism from a Cartesian power of the first structure to the second one. This can be spelled out as follows.

\begin{definition}
    Let $(\rel{A},\rel{B})$ be a PCSP template. A mapping $f: A^n \to B$ (where $n$ is a positive integer) is a \emph{polymorphism of arity $n$} if, for each pair of corresponding relations $R_i$ and $R'_i$ in the signatures of $\rel A$ and $\rel B$, respectively, and any $(r_{1,1},$ $r_{2,1},$ $\ldots ,$  $r_{n,1}),$ $\ldots , $ $(r_{1,\ar_i},$ $r_{2,\ar_i},$ $\ldots ,$ $r_{n,\ar_i})$ with $(r_{j,1},$ $r_{j,2},$ $\ldots ,$ $r_{j,\ar_i})$ $\in R_i$ for all $j \in [n]$, we have $(f(r_{1,1},$ $r_{2,1},$ $\ldots ,$ $r_{n,1}),$ $\ldots ,$ $f(r_{1,\ar_i},$ $r_{2,\ar_i},$ $\ldots $, $r_{n,\ar_i})) \in R'_i$.
\end{definition}

Another core concept in the algebraic approach is a \emph{minor}.

\begin{definition}
Let $f: A^n \to B$, $\alpha:[n] \to [m]$ be mappings. A \emph{minor} of $f$ given by $\alpha$ is the mapping $f^{\alpha}:A^m \to B$ defined by
\[f^{\alpha}(a_1, \dots, a_m) = f(a_{\alpha(1)}, \dots, a_{\alpha(n)})\]
for every $a_1, \dots, a_m \in A$. A function $g: A^m \to B$ is a minor of $f$ if $g=f^{\alpha}$ for some $\alpha$. 
\end{definition}

The significance of polymorphisms stems from the fact that the computational complexity of $\PCSP(\rel A,\rel B)$ depends only on the set of all polymorphisms of the template $(\rel A,\rel B)$~\cite{BG18,BKO19,BBKO}. 
This set is a \emph{minion}, i.e., it is closed under taking minors.

\section{Templates} \label{sec:templates}

In this section we introduce the notation for the templates considered in the paper, and provide several easy observations about these templates.

We consider symmetric relational structures with a single ternary relation. To each such structure $\rel B = (B; R)$ we associate \emph{its digraph} by taking $B$ as the vertex set and including the arc $b \to b'$ if and only if $(b,b,b') \in R$. By $\rel{B}^+$ we denote the structure obtained from $\rel B$ by adding to $R$ all the tuples $(b,b',b'')$ with $|\{b,b',b''\}|=3$. Note that this is the ``largest'' structure with the same associated digraph as $\rel B$. Also observe that over a three-element domain, i.e., $|B|=3$, there are exactly two structures with the same associated digraph. The notational convention for 3-element structures in Figure~\ref{fig:lattice} is given in Table~\ref{tab:names}.\footnote{The notation is derived from the number of edges of the associated digraph in Italian.} The names in the table refer to the smaller of the two structures with the same digraph, e.g., the relation of $\rel{D}_2$ consists of all permutations of the triples $(0,0,1)$, $(1,1,2)$ while $\rel{D}_2^+$ also contains $(0,1,2)$ and its permutations. Of course, the structure depends on the concrete choice of vertices, but the choice is irrelevant for our purposes.

    \begin{table}[t]
        \centering
 		\caption{Diagrams of Symmetric Structures}
		\label{tab:names}
		\begin{tabular}{|c|c|c|c|c|c|c|}
		    \hline
		    &&&&&&\\
		    Diagram &
		    \begin{tikzpicture}
			    \draw[->](0,0)--(.5,0);
			 \end{tikzpicture} & 
			 \begin{tikzpicture}
			    \draw[->](0,0)--(.5,0);
			    \draw[->](.5,-.1)--(0,-.1);
			 \end{tikzpicture} &
			 \begin{tikzpicture}
			    \draw[->](0,0)--(.5,0);
			    \draw[->](0,0)--(0,.5);
			\end{tikzpicture} &
			\begin{tikzpicture}
			    \draw[->](0,0)--(.5,0);
			    \draw[->](.6,0)--(1.1,0);
			\end{tikzpicture} &
			\begin{tikzpicture}
			    \draw[->](0,0)--(.5,0);
			    \draw[->](0,0)--(0,.5);
			    \draw[->](.6,0)--(.1,.5);
			\end{tikzpicture} &
			 \begin{tikzpicture}
			    \draw[->](.1,0)--(.6,0);
			    \draw[->](0,.5)--(0,0);
			    \draw[->](.7,0)--(.1,.5);
			\end{tikzpicture}\\
			&&&&&&\\
		    \hline 
		    Structure $\mathbf{B}$ & $\mathbf{1in3}$ & $\mathbf{NAE}$ & $\mathbf{D_1}$ & $\mathbf{D_2}$
		    & $\mathbf{T_1}$ & $\mathbf{T_2}$\\
		    \hline
		    \hline
		    &&&&&&\\
		    Diagram &
		    \begin{tikzpicture}
			    \draw[->](0,0)--(.5,0);
			    \draw[->](0,0)--(0,.5);
			    \draw[->](.1,.5)--(.6,0);
			    \draw[->](.7,.1)--(.2,.6);
			\end{tikzpicture} & 
			\begin{tikzpicture}
			    \draw[->](0,0)--(.5,0);
			    \draw[->](.5,-.1)--(0,-.1);
			    \draw[->](0,0)--(0,.5);
			    \draw[->](.1,.5)--(.6,0);
			\end{tikzpicture} &
			 \begin{tikzpicture}
			    \draw[->](0,0)--(.5,0);
			    \draw[->](0,0)--(0,.5);
			    \draw[->](-.1,.5)--(-.1,0);
			    \draw[->](.1,.5)--(.6,0);
			\end{tikzpicture} &
			\begin{tikzpicture}
			    \draw[->](0,0)--(.5,0);
			    \draw[->](0,0)--(0,.5);
			    \draw[->](.5,-.1)--(0,-.1);
			    \draw[->](-.1,.5)--(-.1,0);
			    \draw[->](.1,.5)--(.6,0);
			\end{tikzpicture} &
			\begin{tikzpicture}
			    \draw[->](0,0)--(.5,0);
			    \draw[->](0,0)--(0,.5);
			    \draw[->](.5,-.1)--(0,-.1);
			    \draw[->](-.1,.5)--(-.1,0);
			    \draw[->](.1,.5)--(.6,0);
			    \draw[->](.7,.1)--(.2,.6);
			\end{tikzpicture} &\\
			&&&&&&\\
		    \hline 
		    Structure $\mathbf{B}$ & $\mathbf{Q_1}$ & $\mathbf{Q_2}$ & $\mathbf{Q_3}$ & $\mathbf{C}$
		    & $\mathbf{S}$ &\\
		    \hline
 		\end{tabular}
	\end{table} 

It is a simple exercise to verify that Figure~\ref{fig:lattice} is correct, and we do not give the details here. Let us just observe that $\rel{T}_1^+$ is the only case not covered by Theorem~\ref{thm:main}.
Indeed, the digraph associated to a three-element $\rel B$  either contains a directed cycle, or is acyclic. In the former case, depending on the length of the cycle we have $\rel{NAE} \to \rel{B}$ (length 1 or 2) or $\rel{T}_2 \to \rel{B}$ (length 1 or 3). In the latter case, the digraph can be extended to a linear order, so $\rel B \to \rel{T}_1^+$.
If $\rel B \neq \rel{T}_1^+$, then $\rel B$ has a homomorphism to a symmetric substructure of $\rel{T}_1^+$ with one of the  four triples, $(0,0,1)$, $(0,0,2)$, $(1,1,2)$, $(0,1,2)$, missing. By omitting the second, the third, or the fourth tuple we get the structures $\rel{D}_2^+$, $\rel{D}_1^+$, $\rel{T}_1$ from the second item of Theorem~\ref{thm:main}, respectively. By omitting the first tuple, we get a structure $\rel B$ such that $\rel{B} \to \rel{1in3} \to \rel{B}$, and so this structure sits at the bottom of Figure~\ref{fig:lattice}.

The ``hard'' structure $\rel{T}_1^+$ has a simple description. A tuple $(b_1,b_2,b_3)$ is in its relation if and only if the following condition is satisfied: if two of $b_1,b_2,b_3$ are equal to $b$, then the remaining one must be strictly greater than $b$ in the linear order $0<1<2$. We denote the structure obtained by the same definition on a $k$-element domain ordered $0 < 1 < 2 < \dots < k-1$ by $\rel{LO}_k$, e.g., $\rel{LO}_2 = \rel{1in3}$ and $\rel{LO}_3 = \rel{T}_1^+$.

For the case $|B|=4$, a similar case analysis shows that the only structures $\rel B$ with $\rel{NAE},\rel{T}_2 \not\to \rel{B}$ and $\rel B \not\to \rel{LO}_4$ are the structures whose associated digraph is the directed cycle of length 4 -- the structures in the interval between $\rel{\check{C}}$ and $\rel{\check{C}}^+$ alluded to in the introduction.

Finally, we denote by $\rel{NAE}_k$ the structure with a $k$-element domain and the ternary non-all-equal relation, e.g., $\rel{NAE}_2 = \rel{NAE}$.

\section{Tractability and hardness} \label{sec:PandNP}

In this section we deal with the simple cases, and cases that are resolved by known results. We also provide the hardness criterion that we will employ for the more complex cases.
Throughout the section we consider a PCSP template $(\rel A,\rel B)$ such that $\rel A$ is a relational structure with the two-element domain $\{0,1\}$ and a single symmetric ternary relation, and $\rel B = (B; R)$, where $B = \{0,1, \dots\}$ and $R \subseteq B^3$. 

\subsection{Symmetrization}

First, observe that $R$ can be assumed symmetric without loss of generality. Indeed, for an instance $\rel X$ of $\PCSP(\rel A,\rel B)$ such that $\rel X \to \rel A$, we also have $\rel X^{\sym} \to \rel A$, where $\rel X^{\sym} = (B,R^{\sym})$ is the \emph{symmetrization} of $\rel X$, i.e., $R^{\sym}$ contains all the permutations of the tuples in $R$. It is also easily seen that $\rel X^{\sym} \to \rel B^{\sym}$ implies $\rel X \to \rel B$. It follows that $\rel X \mapsto \rel X^{\sym}$ is a correct reduction from $\PCSP(\rel A, \rel B)$ to $\PCSP(\rel A, \rel B^{\sym})$. In the other direction, we can use the trivial reduction, and therefore these PCSPs are equivalent. 

For the remainder of the section we assume that $\rel B$ is a symmetric structure.

\subsection{Tractability}

If the relation in $\rel A$ or $R$ contains a constant tuple, $(\rel A, \rel B)$ is a homomorphic relaxation of the ``trivial'' template whose two structures have a one-element domain. In particular, $\PCSP(\rel A, \rel B)$ is in P. 

If $\rel A = \rel{1in3}$ and $\rel{NAE} \to \rel B$, then $(\rel A, \rel B)$ is a homomorphic relaxation of $(\rel{1in3},\rel{NAE})$. The PCSP over the latter template is in P by~\cite{BG18}, and therefore so is the PCSP over the former.

The remaining tractable case in Theorem~\ref{thm:main} is  $\rel A = \rel{1in3}$ and $\rel{T}_2 \to \rel B$. These templates are homomorphic relaxations of $(\rel{T}_2,\rel{T}_2)$. But the PCSP over $(\rel{T}_2,\rel{T}_2)$, i.e., the CSP over $\rel{T}_2$, is in P because the relation of $\rel{T}_2$ can be described as $\{(x,y,z) \in \{0,1,2\}^3: x+ y+ z = 1 \pmod{3}\}$, and so $\CSP(\rel{T}_2)$ can be efficiently solved by solving a system of linear equations over the three-element field. 

These tractability results can be also derived from a recent theorem that we now state. We require a definition. A mapping $f: A^n \to B$ is \emph{block-symmetric of width $k$} if there exists a partition of the coordinates of $f$ into blocks $X_1 \cup \dots \cup X_l = [n]$ of size at least $k$ such that $f$ is permutation-invariant within each coordinate block $X_i$.

\begin{theorem}[\cite{BG20,BGWZ}]
The following are equivalent for every PCSP template $(\rel A,\rel B)$.
\begin{itemize}
    \item $(\rel A,\rel B)$ has block-symmetric polymorphisms of arbitrarily high width.
    \item For every $k \in \mathbb{N}$, $(\rel A,\rel B)$ has a block-symmetric polymorphism of arity $2k+1$ with two blocks of size $k$ and $k+1$.
\end{itemize}
If these equivalent conditions are satisfied, then $\PCSP(\rel A, \rel B)$ is in P.
\end{theorem}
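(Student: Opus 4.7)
The plan has two parts: establish the equivalence, then deduce tractability from a linear-programming-plus-affine-integer-programming algorithm.

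For the equivalence, the direction from the second condition to the first is immediate since a block-symmetric polymorphism of arity $2k+1$ with blocks of sizes $k$ and $k+1$ has width $k$, so letting $k$ range witnesses arbitrarily high width. For the forward direction, I would start with a block-symmetric polymorphism $f$ of large arity $n$ whose blocks $X_1,\dots,X_l$ all have size at least $N$ for some $N \gg k$, and build the desired polymorphism as a minor of $f$ in two stages. The first minor collapses $l-2$ of the blocks onto the remaining two, yielding a block-symmetric polymorphism with exactly two large blocks. The second minor uses, within each of the two remaining blocks, a surjection with uniform fibers onto a target set of size $k$ (respectively $k+1$). Uniform fibers are essential: any permutation of the target block lifts to a block-preserving permutation of the source, so block-symmetry passes to the minor.

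For the tractability, I would invoke the $\mathrm{BLP}+\mathrm{AIP}$ algorithm of~\cite{BGWZ}, which runs the basic linear programming relaxation alongside the affine integer programming relaxation over tuples of $\rel B$. Soundness is clear: any $\rel A$-colorable instance yields a joint solution by pushing the $\rel A$-coloring through the homomorphism $\rel A \to \rel B$. For completeness one extracts, from a $\mathrm{BLP}+\mathrm{AIP}$ solution, a length-$M$ list of colors per variable realizing its rational LP marginals with common denominator $M$, with the AIP component ensuring that these lists can be aligned integer-consistently across all constraints. Feeding such lists into a block-symmetric polymorphism of sufficient width, in the two-block form supplied by the second condition, produces a valid $\rel B$-coloring; block-symmetry guarantees well-definedness under the reorderings used to align different constraints.

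I expect the hard part to be the completeness step of the tractability argument: converting a $\mathrm{BLP}+\mathrm{AIP}$ certificate into mutually consistent polymorphism inputs simultaneously across every constraint. This is a combinatorial alignment problem exploiting both the rational (LP) and the integer (AIP) structure, and it is precisely where the two-block form becomes indispensable --- one block carries the fractional contribution while the other absorbs the signed integer correction, with the $(k,k+1)$ size offset matching the unit integrality gap that AIP must accommodate. The equivalence, by contrast, is a routine though delicate minor-chase, the one subtle point being the uniform-fiber trick needed to preserve block-symmetry under coordinate identification.
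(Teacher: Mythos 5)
First, note that the paper does not prove this statement at all: it is quoted from \cite{BG20,BGWZ}, so there is no in-paper proof to compare against, and your attempt must be judged on its own terms. The direction from the two-block form to arbitrarily high width is immediate, as you say. The converse, however, has a genuine gap. Your construction first collapses $l-2$ blocks onto the remaining two and then applies, within each surviving block, a surjection with uniform fibers onto $k$ (respectively $k+1$) coordinates. Both stages need divisibility that the hypothesis does not supply: a uniform-fiber surjection from a set of size $m$ onto a set of size $p$ exists only when $p$ divides $m$, so the collapsing stage needs each absorbed block's size to be a multiple of the target block's size, and the second stage needs $k \mid N_1$ and $(k+1)\mid N_2$. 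A width-$N$ polymorphism could have, say, exactly two blocks each of prime size $p>N$, and then (for $k \geq 4$, say) neither divisibility is available and your minors do not exist. The standard repair avoids collapsing altogether and uses \emph{mixed} fibers: take width $N\geq k^2$, write each block size as $n_i = k a_i + (k+1) b_i$ with $a_i,b_i\geq 0$ (possible since $n_i$ exceeds the Frobenius number of the coprime pair $k$, $k+1$), and choose the minor map so that, inside the $i$-th original block, each of the first $k$ target coordinates has exactly $a_i$ preimages and each of the last $k+1$ has exactly $b_i$. Because fiber sizes are constant within each original block over each target block, permuting coordinates inside either target block fixes every multiset seen by $f$, so the minor is block-symmetric with blocks of sizes $k$ and $k+1$.

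For the tractability claim, identifying the combined basic LP and affine IP relaxation and the role of the $(k,k+1)$ split is the right move, but what you write is a description of the shape of the completeness argument of \cite{BGWZ} rather than a proof: extracting integer-consistent weight lists from the affine relaxation and aligning them across all constraints is precisely the technical core of that paper, and you acknowledge it as ``the hard part'' without carrying it out. Since the present paper itself only cites the result, the honest way to close this part is to invoke the main theorem of \cite{BGWZ} directly (your second condition is exactly its hypothesis); as written, the rounding step is not established.
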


In fact, this theorem is strong enough to prove the tractability of all the currently known tractable Boolean PCSPs. In Appendix~\ref{sec:chplus} we use this fact to provide evidence for the NP-hardness of  $\PCSP(\rel{1in3},\rel{\check{C}^+})$: we prove that the template does not have a block-symmetric polymorphism with two blocks of sizes 23 and 24, as claimed in the second part of Theorem~\ref{thm:largerdomains}. 

\subsection{Hardness}

If $\rel A = \rel{NAE}$ and $R$ does not contain a constant tuple, then $(\rel{NAE},$ $\rel{NAE}_{\vert B \vert})$ is a homomorphic relaxation of $(\rel A, \rel B)$, and is therefore NP-hard by the following theorem. 

\begin{theorem}[\cite{DRS05}] \label{thm:hypergraphs}
$\PCSP(\rel{NAE}_k, \rel{NAE}_l)$ is NP-hard for every $2 \leq k \leq l$.
\end{theorem}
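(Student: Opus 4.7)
The plan is to reduce from a hard version of Label Cover using the Dinur--Regev--Smyth long-code construction, with the soundness analysis powered by Lov\'asz's topological lower bound on the chromatic number of Kneser graphs. In the algebraic language used elsewhere in the paper, one equivalently wants to show that every polymorphism $f : [k]^n \to [l]$ of $(\rel{NAE}_k, \rel{NAE}_l)$ has bounded essential arity (independent of $n$), and then invoke the algebraic NP-hardness criterion from~\cite{BBKO,BWZ20} to conclude.

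First, I would characterize the polymorphisms. A function $f : [k]^n \to [l]$ preserves the NAE relation if and only if, whenever $\vc{a}_1, \vc{a}_2, \vc{a}_3 \in [k]^n$ are non-constant in every coordinate (i.e., $\{a_{1,i}, a_{2,i}, a_{3,i}\}$ has at least two elements for each $i \in [n]$), the image triple $(f(\vc{a}_1), f(\vc{a}_2), f(\vc{a}_3))$ is also non-constant. Assigning to each color $c \in [l]$ a set of influential coordinates (a subset of $[n]$ on which the fibre $f^{-1}(c)$ is ``spread''), the NAE constraint forces overlap/disjointness conditions among these influence sets.

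The topological heart of the argument is Lov\'asz's theorem $\chi(KG(n,k)) = n - 2k + 2$. One interprets the assignment of colors to influence sets as a coloring of a Kneser-like graph, so that for $n$ sufficiently large (compared to $k, l$) no such coloring can exist unless each polymorphism has a small core of essential coordinates. This core functions as a dictator-decoder: given a gap Label Cover instance, the long-code construction replaces each variable with label set $R$ by a block $[k]^R$ of vertices, and hyperedges encode the Label Cover projections in such a way that any NAE $l$-coloring of the resulting $3$-uniform hypergraph induces a polymorphism on each block, whose core supplies a short list of candidate labels.

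The main obstacle is the quantitative soundness step: aggregating the per-block decoders into a labeling of the whole Label Cover instance that satisfies a non-negligible fraction of the projection constraints, contradicting the gap hypothesis. This requires a robust, noise-stable version of the topological bound (rather than a plain existence statement), which is the technical core of the DRS argument. An additional routine step promotes the base case to the full range $2 \leq k \leq l$ by composing with the trivial homomorphic relaxation $(\rel{NAE}_k, \rel{NAE}_l) \to (\rel{NAE}_k, \rel{NAE}_{l+1})$ and using monotonicity in $l$, combined with a choice of $k$-dependent parameters that makes the Kneser bound bite.
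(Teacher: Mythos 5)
This statement is quoted verbatim from~\cite{DRS05}; the paper gives no proof of its own, so your outline has to stand alone, and it does not. The most concrete problem is that your algebraic reformulation is false: polymorphisms of $(\rel{NAE}_k,\rel{NAE}_l)$ do \emph{not} have bounded essential arity. For example, the map $f:\{0,1\}^n\to\{0,1,2\}$ with $f(x)=0$ when $x_1=0$ and $f(x)=1+\mathrm{OR}(x_2,\dots,x_n)$ when $x_1=1$ is a polymorphism of $(\rel{NAE}_2,\rel{NAE}_3)$ (any three inputs with $f$-values all equal would have to agree in the first coordinate, violating the NAE promise there), and every one of its $n$ coordinates is essential. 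What~\cite{DRS05} and the algebraic treatments in~\cite{BBKO,BWZ20} actually establish is much weaker: one can assign to each polymorphism a \emph{small selected set} of coordinates (polylogarithmic in the arity, or constant after Wrochna's refinement) such that selected sets intersect along chains of minors — exactly the format of Theorem~\ref{thm:wrochna}; the footnote following that theorem in this paper makes the point that the plain ``$l=1$'' (junta-style) condition is \emph{not} sufficient for this template. Building a reduction on the false bounded-essential-arity claim would therefore fail at the decoding step.

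Beyond that, the step you yourself flag as ``the main obstacle'' — the soundness analysis aggregating per-block decoders against the Label Cover gap — is the entire content of the theorem, and you leave it as a black box; naming Kneser graphs and the long code identifies the ingredients of the DRS argument without reproducing it (and the actual engine there is a multilayered smooth PCP combined with an intersection theorem for set families, not a ``noise-stable version'' of Lov\'asz's bound). Finally, your parameter bookkeeping runs the wrong way: since $\rel{NAE}_l\to\rel{NAE}_{l+1}$ but $\rel{NAE}_{l+1}\not\to\rel{NAE}_l$ (merging two colors turns a not-all-equal triple into a constant one), $(\rel{NAE}_k,\rel{NAE}_{l+1})$ is a homomorphic relaxation of $(\rel{NAE}_k,\rel{NAE}_{l})$, so hardness for \emph{more} colors is the stronger statement and cannot be obtained from hardness for fewer colors by relaxation. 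The relaxation that is actually needed is in $k$: $(\rel{NAE}_2,\rel{NAE}_l)$ is a homomorphic relaxation of $(\rel{NAE}_k,\rel{NAE}_l)$ for every $k\ge 2$, so the base case $k=2$ (which~\cite{DRS05} proves for every constant $l$) yields the full range $2\le k\le l$.
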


The hard cases in Theorems~\ref{thm:main}, ~\ref{thm:largerdomains} -- that is, $\rel A = \rel{1in3}$, and $\rel B = \rel{D}_1^+$, $\rel{D}_2^+$, $\rel{T}_1$, or $\rel{\check{C}}$ -- are dealt with in Sections~\ref{sec:D1plus}, ~\ref{sec:D2plus}, and ~\ref{sec:T1}, and Appendix ~\ref{sec:ch}, respectively.
All of these results employ an NP-hardness criterion that we now state. We will require an additional piece of notation. A \emph{chain of minors} is a sequence of the form $(f_0,$ $\alpha_{0,1},$ $f_1,$ $\alpha_{1,2},$ \dots, $\alpha_{l-1,l},$ $f_l)$ where $f_0, \dots, f_l: A^{n_i} \to B$, $\alpha_{i-1,i}: [n_{i-1}] \to [n_{i}]$, and $f_{i-1}^{\alpha_{i-1,i}} = f_{i}$ for every $i \in [l]$. We will then write $\alpha_{i,j}: [n_i] \to [n_j]$ for the composition of $\alpha_{i,i+1}$, $\alpha_{i+1,i+2}$, \dots, $\alpha_{j-1,j}$. Note that $f_i^{\alpha_{i,j}} = f_j$. 

\begin{theorem}[Corollary 4.2. in~\cite{BWZ20}] \label{thm:wrochna}
Let $(\rel A,\rel B)$ be a PCSP template. Suppose there are constants $k, l \in \mathbb{N}$ 
and an assignment of a set of at most $k$ coordinates $\sel(f) \subseteq [\ar(f)]$ to every polymorphism $f$ of $(\rel A, \rel B)$ such that for every chain of minors $(f_0, \alpha_{0,1}, \dots, f_l)$ with each $f_i$ a polymorphism of $(\rel A,\rel B)$, there are $0 \leq i < j \leq l$ such that
$\alpha_{i,j} (\sel(f_i)) \cap \sel(f_j) \neq \emptyset$ (or, equivalently, 
$\sel(f_i) \cap \alpha_{i,j}^{-1}(\sel(f_j)) \neq \emptyset$). 
Then $\PCSP(\rel A,\rel B)$ is NP-hard.
\end{theorem}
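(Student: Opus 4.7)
The plan is to prove NP-hardness by reducing from a suitably gapped Label Cover problem, following the polymorphism-based framework of~\cite{BKO19,BBKO}. The intuition is that the selection sets $\sel(f)$ behave like a list-labeling of vertices in a constraint graph: the chain condition forces any walk of length $l$ to contain two vertices whose lists are compatible along the walk's composed projection, so any solution to the PCSP will induce a Label Cover assignment of positive weight, contradicting the NO guarantee of the Gap Label Cover instance.

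\textbf{Construction and completeness.} Fix a Label Cover instance $(G, \Sigma, \Pi)$ on a bipartite graph $G = (U \cup V, E)$ with alphabet $\Sigma$ and edge constraints $\pi_e: \Sigma \to \Sigma$. Using the free structure $\rel F_{|\Sigma|}$ of $\rel A$ on $|\Sigma|$ generators, build an instance $\rel I$ of $\PCSP(\rel A, \rel B)$ consisting of one copy of $\rel F_{|\Sigma|}$ per vertex, where for every edge $e = vw$ of $G$ one identifies coordinates between the two blocks according to $\pi_e$. By the universal property of free structures, a homomorphism $\rel I \to \rel B$ is precisely a choice of polymorphism $f_v$ of arity $|\Sigma|$ at every vertex such that $f_w = f_v^{\pi_e}$ at every edge. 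In particular, every walk $v_0, \dots, v_l$ in $G$ produces a chain of minors $(f_{v_0}, \alpha_{0,1}, \dots, \alpha_{l-1,l}, f_{v_l})$ in the polymorphism minion. For completeness, a satisfying Label Cover assignment yields a valid solution by setting $f_v := f^*$ at every vertex for some fixed arity-$|\Sigma|$ polymorphism $f^*$ of $(\rel A, \rel B)$, which exists because $\rel A \to \rel B$.

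\textbf{Soundness and main obstacle.} Given a solution $\{f_v\}$ to $\rel I \to \rel B$, define the list-labeling $L(v) := \sel(f_v) \subseteq [|\Sigma|]$ of size at most $k$. The hypothesis on $\sel$ guarantees that every walk $v_0, \dots, v_l$ in $G$ admits indices $i < j$ with $\alpha_{i,j}(L(v_i)) \cap L(v_j) \neq \emptyset$, so the lists at $v_i$ and $v_j$ carry a label-pair matched under the composed projection of the walk. A probabilistic labeling argument (pick a uniformly random element from $L(v)$ at each vertex) combined with averaging over walks of length $l$ produces a genuine Label Cover assignment satisfying a constant fraction of edges, where the fraction depends only on $k$ and $l$. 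Choosing the Gap Label Cover parameters below this fraction then yields NP-hardness of $\PCSP(\rel A, \rel B)$ via the PCP theorem and Raz's parallel repetition. The crux of the argument is this soundness step: one must carefully choose the shape of $G$ (a layered or expander-like construction) so that walks of length $l$ sample edges sufficiently uniformly and so that the at most $\binom{l+1}{2}$ possible "collision positions" distribute favorably over edges. A secondary subtlety is realizing $\rel F_{|\Sigma|}$ as a PCSP instance of $(\rel A, \rel B)$ of polynomial size, which is standard from the algebraic approach but must respect the signature (symmetric, single ternary relation).
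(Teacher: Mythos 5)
This theorem is not proved in the paper at all --- it is imported verbatim as Corollary 4.2 of~\cite{BWZ20} and used as a black box --- so there is no internal proof to compare against; what you have written is an attempted reconstruction of the cited result. Your overall architecture (Label Cover, one copy of the free structure per vertex so that solutions correspond to polymorphisms related by minors along edges, and list-decoding via the sets $\sel(f)$ of size at most $k$) is the right skeleton, but two steps are genuinely broken. First, completeness: setting $f_v := f^*$ for a \emph{fixed} polymorphism $f^*$ does not use the satisfying Label Cover assignment, does not satisfy the edge constraints $f_w = f_v^{\pi_e}$ (that would require $f^* = (f^*)^{\pi_e}$ for all projections $\pi_e$, i.e., a symmetric-like polymorphism, whose existence is exactly what one does \emph{not} expect for a hard template), and in any case lands in $\rel B$ rather than $\rel A$. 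The correct completeness direction is that a satisfying assignment $\sigma$ yields a homomorphism $\rel I \to \rel A$ by sending each free-structure block at $v$ through the $\sigma(v)$-th coordinate projection; the promise $\rel X \to \rel A$ is what must be certified on YES instances.

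Second, and more seriously, chains of minors of length $l>1$ cannot be obtained from walks in an ordinary bipartite Label Cover instance: the constraint maps along a walk alternate direction ($U \to V$, then $U \to V$ again from the other endpoint), so they do not compose into the maps $\alpha_{i,j}:[n_i]\to[n_j]$ that a chain of minors requires. This is precisely why the proof in~\cite{BWZ20} reduces from the \emph{multilayered} Label Cover of Dinur, Guruswami, Khot, and Regev, with $l+1$ layers and projections oriented consistently from layer $i$ to layer $j$ for $i<j$; its ``weak density'' property is the concrete replacement for your ``expander-like construction so that walks sample edges uniformly,'' and it is what makes the averaging/pigeonhole over the $\binom{l+1}{2}$ layer pairs produce a labeling satisfying an $\Omega(1/k^2)$ fraction of the edges between some pair of layers, contradicting soundness. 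As written, your soundness step has no instance to run on, because the object you reduce from does not supply the chains that the hypothesis of the theorem speaks about.
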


The special case $l=1$ of this theorem is sufficient to prove the NP-hardness of all NP-hard symmetric Boolean PCSPs. For Theorem~\ref{thm:hypergraphs}, $l=1$ is not sufficient; however, it can be derived using a still weaker version of Theorem~\ref{thm:wrochna}.%
\footnote{In fact, the proof in \cite{BBKO} is based on~\cite{DRS05} and applies a version which uses a super-constant $k$ (it is enough that, e.g., $k$ is bounded by a polynomial in the logarithm of the arity of $f$). Wrochna~\cite{Wrochna} has shown that this is not necessary. 
We also remark that Theorem~\ref{thm:wrochna} can be also be strengthened to a super-constant values of $k$. 
}
Theorem~\ref{thm:wrochna} in its full power was first used in~\cite{BWZ20} to prove the NP-hardness of certain symmetric non-Boolean CSPs.

\subsection{0-sets, 1-sets, \dots} \label{subsec:smug}

Given a mapping $f: \{0,1\}^n \to B$ (usually an $n$-ary  polymorphism of $(\rel{1in3},\rel B)$) and a subset of coordinates $X \subseteq [n]$, we write $f(X)$ for the the value $f(a_1, \dots, a_n)$ where $a_i = 1$ if $i \in X$ and $a_i=0$ else. We say that $X$ is a \emph{0-set} if $f(X)=0$. 1-sets, 2-sets, etc. are defined similarly. The function $f$ will be always clear from the context. 

Observe that $f: \{0,1\}^n \to B$ is a polymorphism of $(\rel{1in3}, \rel{B})$ if and only if, for every partition of the coordinates of $f$ into three blocks $X \cup Y \cup Z = [n]$, we have $(f(X),f(Y),f(Z)) \in R$. The forward implication of this observation will be applied many times in the proofs, and we simply say, e.g., ``by compatibility of $f(X)$, $f(Y)$ and $f(Z)$,''  ``by compatibility applied to $X$ and $Y$, \dots,'' or ``by compatibility, \dots'' in such situations.

For example, it is common in our templates for the relation $R$  to have no tuple of the form $(2, 2, \dots)$. Therefore, if $X$ and $Y$ are both known to be 2-sets, we would argue that, by compatibility, it must be the case that $X$ and $Y$ are not disjoint. In such cases, we would say, e.g., ``there are no disjoint 2-sets of $f$.''

One useful feature of $i$-sets is that they are closed under preimages within a chain of minors -- that is, if $(f_0,$ $\alpha_{0,1},$ $f_1,$ $\alpha_{1,2},$ \dots, $\alpha_{l-1,l},$ $f_l)$ is a chain of minors for $(\rel{1in3}, \rel B)$ and $X$ is an $i$-set for some polymorphism $f_{j_1}$ in the chain with $0 \leq j_1 \leq l$, then for any $j_2$ with $0 \leq j_2 < j_1$, $\alpha_{j_2,j_1}^{-1}(X)$ is an $i$-set of $f_{j_2}$.

\section{$\rel{D}_1^+$}\label{sec:D1plus}

In this section we prove the NP-hardness of $\PCSP(\rel{1in3},\rel{D}_1^+)$, where $\rel{D}_1^+ = (\{0,1,2\},R)$ and $R$ consists of all the permutations of the tuples $(0,0,1)$, $(0,0,2)$, and $(0,1,2)$. 

Before applying Theorem~\ref{thm:wrochna}, we first derive several properties of polymorphisms of the template. Let us fix any polymorphism $f: \{0,1\}^n \to \{0,1,2\}$ of  $(\rel{1in3},\rel{D}_1^+)$

\begin{lemma} \label{lm:d11}
 There are no two disjoints 1-sets nor 2-sets.
\end{lemma}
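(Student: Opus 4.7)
The proof should be essentially a direct application of the compatibility observation highlighted in Section~\ref{subsec:smug}, together with the explicit form of the relation $R$ of $\rel{D}_1^+$.

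The plan is to first inspect $R$. Its tuples are the permutations of $(0,0,1)$, $(0,0,2)$, $(0,1,2)$, and one checks immediately that none of them has the form $(1,1,c)$ or $(2,2,c)$ for any $c\in\{0,1,2\}$. In other words, any tuple of $R$ contains at most one entry equal to $1$ and at most one entry equal to $2$. This is the only structural fact about $\rel{D}_1^+$ needed here.

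Now suppose for contradiction that $X,Y\subseteq[n]$ are two disjoint 1-sets of $f$, and set $Z=[n]\setminus(X\cup Y)$. Then $(X,Y,Z)$ is a partition of $[n]$ into three (possibly empty) blocks, so by compatibility of $f$ we have $(f(X),f(Y),f(Z))=(1,1,f(Z))\in R$, contradicting the observation above. The same argument, applied to two disjoint 2-sets, gives $(2,2,f(Z))\in R$, again impossible.

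There is essentially no obstacle: the lemma is a one-line consequence of the absence of tuples of the form $(i,i,\ast)$ with $i\in\{1,2\}$ in $R$, combined with the freedom to complete any two disjoint coordinate sets to a three-block partition of $[n]$.
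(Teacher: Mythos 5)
Your proof is correct and follows exactly the paper's argument: partition $[n]$ into the two disjoint $i$-sets and their complement, apply compatibility, and note that $R$ contains no tuple of the form $(i,i,\ast)$ for $i\in\{1,2\}$. Nothing is missing.
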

\begin{proof}
If  $X$ and $Y$ are $i$-sets for the same $i \in \{1,2\}$, then $(f(X),f(Y),f([n] \setminus (X \cup Y)) \in R$ by compatibility. But $R$ does not contain any tuple of the form $(1,1,\dots)$ or $(2,2,\dots)$, a contradiction.
\end{proof}

The next lemma uses the high chromatic number of Kneser graphs. Recall that the Kneser graph with parameters $n,m$, denoted $\rel{KG}_{n,m}$, is the graph whose vertices are the $m$-element subsets of $[n]$, and where two vertices are adjacent if and only if the two corresponding sets are disjoint.

\begin{theorem}[\cite{Lov78}]~\label{thm:lovasz}
For $n \geq 2m$, there is no coloring of $\rel{KG}_{n,m}$ by strictly less than $n-2m+2$ colors.
\end{theorem}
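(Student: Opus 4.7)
The plan is to use the Borsuk--Ulam theorem via Bárány's elegant topological argument. The main obstacle is essentially unavoidable: Kneser's conjecture has no known purely combinatorial proof and inherently requires topological input, which I would package as the Lyusternik--Shnirelman--Borsuk covering theorem: any cover of $S^d$ by $d+1$ sets, each either open or closed, contains some set with a pair of antipodal points.

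Set $d = n - 2m + 1$. First I would fix $n$ points $V = \{v_1, \dots, v_n\}$ on the unit sphere $S^d$ in general position, meaning that no $d+1$ of them lie on a common great $(d-1)$-subsphere. Such a placement exists because general position is a generic condition; one explicit choice is the moment-curve points $(t, t^2, \dots, t^{d+1})$ normalized to $S^d$ for $n$ distinct values of $t$.

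Next, suppose for contradiction that a proper coloring $c : \binom{[n]}{m} \to [d]$ exists, where $d = n-2m+1$. For each $x \in S^d$, let $H(x) = \{y \in S^d : \langle x, y \rangle > 0\}$ be the open hemisphere centered at $x$. For $i \in [d]$, define
\[ A_i = \{x \in S^d : \text{there exists } F \in \binom{[n]}{m} \text{ with } c(F) = i \text{ and } v_j \in H(x) \text{ for every } j \in F\}, \]
and put $A_0 = \{x \in S^d : |\{j : v_j \in H(x)\}| < m\}$. Each $A_i$ for $i \geq 1$ is open (an existential open condition), $A_0$ is closed, and together these $d+1$ sets cover $S^d$. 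The Lyusternik--Shnirelman--Borsuk theorem then produces antipodes $x, -x$ lying in a common $A_i$.

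Finally I would conclude by case analysis on $i$. If $i \geq 1$, there are $m$-subsets $F_x, F_{-x}$ of color $i$ with $v_j \in H(x)$ for $j \in F_x$ and $v_j \in H(-x)$ for $j \in F_{-x}$; since $H(x) \cap H(-x) = \emptyset$, the sets $F_x, F_{-x}$ are disjoint, hence adjacent in $\rel{KG}_{n,m}$, contradicting properness of $c$. If $i = 0$, then $H(x)$ and $H(-x)$ each contain fewer than $m$ points of $V$, so at least $n - 2(m-1) = d+1$ of the $v_j$ lie on the equator $\{y : \langle x, y \rangle = 0\}$, a great $(d-1)$-subsphere, contradicting general position. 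Both cases fail, ruling out any proper coloring with $n - 2m + 1$ colors, so $\chi(\rel{KG}_{n,m}) \geq n - 2m + 2$.
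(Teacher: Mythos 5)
Your argument is correct: it is the Bárány--Greene proof of Kneser's conjecture, and every step checks out. The paper itself does not prove this statement at all -- it imports it as a black box from Lov\'asz's 1978 paper -- so there is no internal proof to compare against; the relevant comparison is with Lov\'asz's original argument, which goes through the neighborhood complex of $\rel{KG}_{n,m}$ and a connectivity bound derived from Borsuk--Ulam. Your route is the standard streamlined alternative: the dimension count $d=n-2m+1$ is right, the moment-curve points give the required general position (any $d+1$ of them are linearly independent by Vandermonde, so no great $(d-1)$-subsphere contains $d+1$ of them), the sets $A_1,\dots,A_d$ are open as finite unions of finite intersections of open half-spaces, $A_0$ is closed as the complement of an open condition, and the two cases of the Lyusternik--Shnirelman--Borsuk conclusion deliver either two disjoint same-colored $m$-sets or $n-2(m-1)=d+1$ points on an equator. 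The one point worth flagging is that you invoke the \emph{mixed} (open-or-closed) version of the covering theorem rather than the classical all-closed or all-open version; this generalization is true and is exactly Greene's observation, but it does require its own short derivation from Borsuk--Ulam, so if you were writing this out in full you should either cite it or prove it rather than treat it as the textbook statement. What your approach buys over Lov\'asz's is brevity and minimal topological machinery; what it gives up is the structural information about the neighborhood complex that Lov\'asz's proof provides, which is irrelevant for the way the present paper uses the theorem (only the chromatic number bound is needed in Lemma~\ref{lm:d12}).
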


\begin{lemma} \label{lm:d12}
$f$ has a 1-set or a 2-set of size at most 3.
\end{lemma}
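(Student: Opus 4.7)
The plan is to argue by contradiction using Theorem~\ref{thm:lovasz} applied to an appropriately chosen Kneser graph. Assume that every $1$-set and every $2$-set of $f$ has size at least $4$, so that every subset of $[n]$ of size at most $3$ is forced to be a $0$-set. The case $n \leq 3$ is immediate: $[n]$ itself is then a $0$-set, and applying compatibility to the partition $([n], \emptyset, \emptyset)$ would require $(0,0,0) \in R$, which fails. So I may assume $n \geq 4$.

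Next I would choose $m = \lfloor (n-2)/2 \rfloor$, so that $m \geq 1$, $n - 2m \in \{2, 3\}$, and in particular $n \geq 2m + 2$. Defining $c(X) = f(X) \in \{0,1,2\}$ for every $m$-subset $X$ of $[n]$, the key step is to show that $c$ is a proper $3$-coloring of $\rel{KG}_{n,m}$. To that end, take any two disjoint $m$-subsets $X$ and $Y$ and let $Z = [n] \setminus (X \cup Y)$; since $|Z| = n - 2m \leq 3$, the set $Z$ is a $0$-set by the standing assumption. Compatibility now forces $(f(X), f(Y), 0) \in R$, and an inspection of $R$ shows that none of $(0,0,0)$, $(1,1,0)$, $(2,2,0)$ belongs to $R$ (the last two cases are exactly the content of Lemma~\ref{lm:d11} in disguise). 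Therefore $f(X) \neq f(Y)$, so $c$ is indeed a proper $3$-coloring.

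Finally, Theorem~\ref{thm:lovasz} says that $\rel{KG}_{n,m}$ admits no coloring with strictly fewer than $n - 2m + 2 \geq 4$ colors, contradicting the existence of $c$. The main obstacle I foresee is picking $m$ so that both requirements hold simultaneously: the leftover block $Z$ must automatically be a $0$-set (forcing $n - 2m \leq 3$) while the Kneser lower bound on the chromatic number must still exceed $3$ (forcing $n - 2m \geq 2$). The choice $m = \lfloor (n-2)/2 \rfloor$ threads this needle for every $n \geq 4$.
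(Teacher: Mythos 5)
Your proof is correct and follows essentially the same route as the paper: the same choice $m = \lfloor (n-2)/2\rfloor$, the same application of Lov\'asz's bound on the chromatic number of $\rel{KG}_{n,m}$, and the same compatibility argument on a pair of disjoint $m$-sets with equal value; you merely phrase it contrapositively (assuming all small sets are $0$-sets) and handle $n\leq 3$ separately where the paper only separates out $n=1$. No gap.
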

\begin{proof}
 We first assume that $n\geq 2$ and set $m = \lfloor{(n-2)/2}\rfloor$. Since $n-2m+2 \geq 4$, Theorem~\ref{thm:lovasz} implies that the mapping $X \mapsto f(X)$ cannot be a valid coloring of $\rel{KG}_{n,m}$. Therefore, there are two disjoint sets $X$ and $Y$ of size $m$ such that $f(X) = f(Y)$. By compatibility applied to $X$ and $Y$, the set $Z= [n] \setminus (X \cup Y)$ is a 1-set or 2-set. Its size is at most $n - 2m \leq 3$.

In the case $n=1$, $\{1\}$ is itself a 1-set or a 2-set by compatibility applied to $\emptyset$ and $\emptyset)$.
\end{proof}

We are ready to prove the NP-hardness of our template.

\begin{theorem}
$\PCSP(\rel{1in3},\rel{D}_1^+)$ is NP-hard
\end{theorem}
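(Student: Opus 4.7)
The plan is to apply Theorem~\ref{thm:wrochna} with $k = 3$ and chain length $l = 2$, using Lemmas~\ref{lm:d11} and~\ref{lm:d12} as the only nontrivial inputs. For each polymorphism $f$ of $(\rel{1in3},\rel{D}_1^+)$, I would pick $\sel(f)$ to be some 1-set or 2-set of size at most 3, which exists by Lemma~\ref{lm:d12}, and record its \emph{type} $c(f) \in \{1,2\}$ indicating whether it is a 1-set or a 2-set. This immediately bounds $|\sel(f)| \leq k = 3$.

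To verify the chain condition, I would take any chain of minors $(f_0, \alpha_{0,1}, f_1, \alpha_{1,2}, f_2)$. Since the three polymorphisms carry types in a two-element set, pigeonhole yields indices $0 \leq i < j \leq 2$ with $c(f_i) = c(f_j) =: c$. By the preimage-preservation property of $i$-sets recorded at the end of Section~\ref{subsec:smug}, the set $\alpha_{i,j}^{-1}(\sel(f_j))$ is a $c$-set of $f_i$, and of course $\sel(f_i)$ is another $c$-set of $f_i$. Lemma~\ref{lm:d11} forbids two disjoint $c$-sets of $f_i$, so these two sets must intersect, which is exactly the equivalent form $\sel(f_i) \cap \alpha_{i,j}^{-1}(\sel(f_j)) \neq \emptyset$ of the Wrochna condition. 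Theorem~\ref{thm:wrochna} then delivers NP-hardness.

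The main obstacles are already cleared by the preceding two lemmas: Lemma~\ref{lm:d12} (via the Kneser bound of Lovász) produces a selection set of bounded size, and Lemma~\ref{lm:d11} provides the intersection property that pigeonhole exploits. The only subtlety I would flag is the choice $l = 2$ rather than $l = 1$: Lemma~\ref{lm:d11} is silent on a 1-set meeting a 2-set, so a chain of only two polymorphisms cannot rule out a ``mixed-type'' configuration in which $\sel(f_0)$ is a 1-set while $\sel(f_1)$ is a 2-set. Three polymorphisms and two types is the minimal setup that guarantees a monochromatic pair, and this is precisely the feature of Theorem~\ref{thm:wrochna} beyond the $l = 1$ case that must be invoked here.
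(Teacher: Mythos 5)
Your proposal is correct and follows essentially the same route as the paper's own proof: the same application of Theorem~\ref{thm:wrochna} with $k=3$ and $l=2$, selecting a small 1-set or 2-set via Lemma~\ref{lm:d12}, and using the pigeonhole principle together with Lemma~\ref{lm:d11} to force an intersection. Your closing remark on why $l=2$ is the minimal choice is a correct observation that the paper leaves implicit.
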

\begin{proof}
We apply Theorem~\ref{thm:wrochna} with $k=3$ and $l=2$. For a polymorphism $f$ of the template, we define $\sel(f)$ as a 1-set or a 2-set of size at most $3$ -- such a set exists by Lemma~\ref{lm:d12} (if both a small 1-set and a small 2-set exist, we select arbitrarily).

Let $(f_0, \alpha_{0,1}, f_1, \alpha_{1,2}, f_2)$ be a chain of minors consisting of polymorphisms. By the pigeonhole principle, there exists $0 \leq i < j \leq 2$ such that $\sel(f_i)$ and $\sel(f_j)$ is an $m$-set for the same $m \in \{1,2\}$. Then $\alpha_{i,j}^{-1}(\sel(f_j))$ is an $m$-set as well and then $\sel(f_i) \cap \alpha_{i,j}^{-1}(\sel(f_j)) \neq \emptyset$ by Lemma~\ref{lm:d11}, as required.  
\end{proof}

\section{$\rel{D}_2^{+}$}\label{sec:D2plus}
    
    In this section we prove the NP-hardness of $\PCSP(\rel{1in3},\rel{D}_2^+)$, where $\rel{D}_2^+ = (\{0,1,2\},R)$ and $R$ consists of all the permutations of the tuples $(0,0,1)$, $(1,1,2)$, and $(0,1,2)$.
    
	Let $f: \{0,1\}^n \to \{0,1,2\}$ be a polymorphism of $(\rel{1in3},\rel{D}_2^+)$. 
	We start with a lemma that concerns unions of $i$-sets.
	
	\begin{lemma} \label{lm:unions}
	Let $X$ and $Y$ be disjoint subsets of $[n]$. 
	\begin{itemize}
	    \item[(a)] If $f(\emptyset)=0$, $f(X) = 0$, and $f(Y) \in \{0,2\}$, then $f(X \cup Y) \in \{0,2\}$
	    \item[(b)] If $f(\emptyset)=0$, $f(X) = 1$, and $f(Y) \in \{0,1\}$, then $f(X \cup Y) = 1$
	    \item[(c)] If $f(\emptyset)=1$, $f(X) = f(Y) = 1$, then $f(X \cup Y) \in \{0, 1\}$
	    \item[(d)] If $f(\emptyset)=1$, $f(X) = f(Y) = 0$, then $f(X \cup Y) = 2$
	\end{itemize}
	\end{lemma}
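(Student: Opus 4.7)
The plan is to derive all four claims from the same two-step template. Set $Z := [n] \setminus (X \cup Y)$. The first step applies compatibility to the partition $X \cup Y \cup Z = [n]$: combining the given values $f(X)$ and $f(Y)$ with the fact that the relation $R$ of $\rel{D}_2^+$ consists only of the permutations of $(0,0,1)$, $(1,1,2)$, and $(0,1,2)$, I can read off the possible values of $f(Z)$. The second step applies compatibility to the partition $(X \cup Y) \cup \emptyset \cup Z = [n]$, using the given $f(\emptyset)$ together with the value of $f(Z)$ just determined, to constrain $f(X \cup Y)$.

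Before running the four cases, I would record the brief ``lookup table'' of which $v \in \{0,1,2\}$ extend each pair $(a,b)$ to a tuple in $R$; since $R$ has only $12$ tuples, this is a handful of one-line facts, e.g.\ $(v,0,0)\in R \iff v=1$, $(v,1,1)\in R \iff v=2$, $(v,0,2)\in R \iff v=1$, $(v,0,1)\in R \iff v\in\{0,2\}$, and $(v,1,2)\in R \iff v\in\{0,1\}$. Each of (a)--(d) is then a direct two-line deduction. For instance, in (d), $(0,0,f(Z))\in R$ forces $f(Z)=1$, and then $(f(X\cup Y),1,1)\in R$ forces $f(X\cup Y)=2$; in (c), $(1,1,f(Z))\in R$ forces $f(Z)=2$, and then $(f(X\cup Y),1,2)\in R$ gives $f(X\cup Y)\in\{0,1\}$.

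The only mildly non-trivial point is in (a) and (b), where $f(Y)$ is given as a set of two possible values, so two subcases must be checked. In (a), both subcases ($f(Y)=0$ and $f(Y)=2$) yield $f(Z)=1$, and the second compatibility then gives $f(X\cup Y)\in\{0,2\}$. In (b), the first step yields $f(Z)\in\{0,2\}$ (with the particular value depending on $f(Y)$), and in every resulting subcase the tuple $(f(X\cup Y),0,f(Z)) \in R$ forces $f(X\cup Y)=1$. Consequently there is no genuine obstacle: the lemma is a mechanical check once the allowed tuples of $R$ are at hand, and the main bookkeeping task is to make sure that in (a) and (b) the two subcases merge to give a single conclusion.
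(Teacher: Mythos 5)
Your proposal is correct and follows exactly the paper's argument: compatibility applied to $X$ and $Y$ pins down $f([n]\setminus(X\cup Y))$, and a second application of compatibility to $\emptyset$ and that complement constrains $f(X\cup Y)$. The paper only writes out item (a) and declares the rest similar; your explicit lookup table and sub-case checks for (a)--(d) are all accurate.
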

	
	\begin{proof}
	 For the first item, by  compatibility applied to $X$ and $Y$, the complement $Z = [n] \setminus (X \cup Y)$ is a 1-set. Therefore, by compatibility applied to $\emptyset$ and $Z$, $X \cup Y$ is a 0-set or a 2-set. The proof for the remaining items is similar.
	\end{proof}
	
	The following lemma will be useful in the case that $\emptyset$ is a 0-set.  Note that in this case $[n]$ is a 1-set by compatibility applied to $\emptyset$ and $\emptyset$.

	\begin{lemma}\label{lm:d2l1} Assume $f(\emptyset) = 0$ and that $f$ has no singleton 2-set. Then $f$ has  a singleton 1-set and does not have any two disjoint 1-sets. 
	\end{lemma}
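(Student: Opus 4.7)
The lemma has two parts, which I prove in order.

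For the first part, I would argue by contradiction: suppose no singleton is a 1-set. Combined with the hypothesis that $f$ has no singleton 2-set, every $\{i\}$ is then a 0-set. Iterating Lemma~\ref{lm:unions}(a) with the base $f(\emptyset) = 0$ (induction on $|A|$, adjoining one element of $[n]$ at a time) yields $f(A) \in \{0, 2\}$ for every $A \subseteq [n]$. However, compatibility applied to the partition $(\emptyset, \emptyset, [n])$ gives $(0, 0, f([n])) \in R$, which inspection of $R$ forces to equal $(0, 0, 1)$, and hence $f([n]) = 1$. This contradicts $f([n]) \in \{0, 2\}$.

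For the second part, suppose toward contradiction that $X, Y$ are disjoint 1-sets, and set $Z := [n] \setminus (X \cup Y)$. Because $f$ has no singleton 2-set, each $z \in Z$ satisfies $f(\{z\}) \in \{0, 1\}$. I then iterate Lemma~\ref{lm:unions}(b), starting from the 1-set $X$ and adjoining the singletons $\{z\}$ for $z \in Z$ one at a time: at each step the current set is a 1-set and the singleton's value lies in $\{0, 1\}$, and the two are disjoint by construction, so Lemma~\ref{lm:unions}(b) keeps the value at 1. After absorbing all of $Z$, we obtain $f(X \cup Z) = 1$. Now $X \cup Z$ and $Y$ are disjoint and together cover $[n]$, so compatibility applied to the partition $(X \cup Z,\, Y,\, \emptyset)$ would give $(1, 1, 0) \in R$. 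But $(1, 1, 0)$ is not a permutation of any of $(0, 0, 1)$, $(1, 1, 2)$, $(0, 1, 2)$, so $(1, 1, 0) \notin R$, contradiction.

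The main thing to spot is that in (ii) the no-singleton-2-set hypothesis slots directly into the hypothesis of Lemma~\ref{lm:unions}(b), allowing one to absorb the 2-set $Z$ into the 1-set $X$ in a single iterative sweep and then to extract the forbidden tuple $(1, 1, 0)$ via an empty third part of the partition; once this is noticed, no further case analysis on the arity $n$ or on the internal structure of $X, Y, Z$ is needed.
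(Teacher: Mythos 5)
Your proof is correct and follows essentially the same route as the paper: part one iterates Lemma~\ref{lm:unions}(a) over singletons to force $f([n])\in\{0,2\}$ against $f([n])=1$, and part two absorbs the complement of $X\cup Y$ into one of the 1-sets via Lemma~\ref{lm:unions}(b) to produce the forbidden tuple $(1,1,0)$. The only cosmetic difference is that the paper absorbs the leftover coordinates into $Y$ (concluding $[n]\setminus X$ is a 1-set and contradicting compatibility of $\emptyset$ and $X$) whereas you absorb them into $X$; the two are the same argument up to a permutation of the partition.
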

	
	\begin{proof}
	If every singleton is a 0-set, then by adding to $\emptyset$ singletons, one by one, and using item (a) of Lemma~\ref{lm:unions}, we get that $[n]$ is a 0-set or a 2-set. This contradiction shows that there exists a singleton 1-set. 
		
	For the second part of the claim, suppose $X$ and $Y$ are disjoint 1-sets. By adding to $Y$ singletons and using item (b) of Lemma~\ref{lm:unions}, we obtain that $[n] \setminus X$ is a 1-set, a contradiction to compatibility applied to $\emptyset$ and $X$. 
	\end{proof}

	We now consider the case that $\emptyset$ is a 1-set.  Observe that $[n]$ is a 2-set in this case.

	\begin{lemma}\label{lemma:successor} Assume $f(\emptyset)=1$ and that, for some $j \geq 2$, all at most $j$-element subsets of $[n]$ are 1-sets. Then $j < n$ and all $(j+1)$-element subsets sets of $[n]$ are 1-sets. 
	\end{lemma}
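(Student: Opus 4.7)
The plan is to prove the lemma by contradiction, in three steps: establish $j<n$, pin down $f(S)\in\{0,1\}$ for any $(j+1)$-element $S$ via Lemma~\ref{lm:unions}(c), and then rule out $f(S)=0$ by a double-computation of $f$ on an auxiliary subset.

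The first two steps are quick. For $j<n$, applying compatibility to the partition $\emptyset,\emptyset,[n]$ yields $(1,1,f([n]))\in R$; since $R$ has no other tuples with two $1$'s, this forces $f([n])=2$, and if $j\ge n$ the hypothesis would instead make $[n]$ a 1-set. For any $(j+1)$-element $S$, decomposing $S=\{s\}\cup(S\setminus\{s\})$ as a disjoint union of 1-sets of sizes $1$ and $j$ and applying Lemma~\ref{lm:unions}(c) gives $f(S)\in\{0,1\}$.

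For the main step, suppose $f(S)=0$ and set $T=[n]\setminus S$. Compatibility on $\{s\},S\setminus\{s\},T$ yields $(1,1,f(T))\in R$ and hence $f(T)=2$. Moreover, if $n\le 3j+1$ then $T$ would split into two disjoint $\le j$-subsets $A,B$, making the partition $S,A,B$ yield the forbidden $(0,1,1)$; thus $n\ge 3j+2$ and $|T|\ge 2j+1$. Pick $(j+1)$-subsets $A_1,A_2\subseteq T$ with $|A_1\cap A_2|=1$ and compute $f(A_1\triangle A_2)$ in two ways. First, since $A_1\triangle A_2=(A_1\setminus A_2)\cup(A_2\setminus A_1)$ is a disjoint union of two $j$-sets, Lemma~\ref{lm:unions}(c) restricts $f(A_1\triangle A_2)\in\{0,1\}$, and the partition $S,T\setminus(A_1\triangle A_2),A_1\triangle A_2$ (whose middle part has size $n-3j-1$ and is a 1-set whenever $n\le 4j+1$) gives the tuple $(0,1,f(A_1\triangle A_2))\in R$, restricting to $\{0,2\}$, and hence $f(A_1\triangle A_2)=0$. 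Second, by splitting the complement $[n]\setminus(A_1\triangle A_2)$ (of size $n-2j$) into two disjoint $\le j$-subsets $X,Y$, the partition $X,Y,A_1\triangle A_2$ gives $(1,1,f(A_1\triangle A_2))\in R$ and forces $f(A_1\triangle A_2)=2$, the desired contradiction.

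The main obstacle is that the second computation requires $n-2j\le 2j$, i.e., $n\le 4j$, so the direct argument only covers the range $3j+2\le n\le 4j$. For larger $n$ a different strategy is needed; a na\"\i ve attempt to pass to a minor of $f$ fails because ``all $\le j$-subsets are 1-sets'' is not obviously preserved under coordinate identification. Instead I would try to leverage the $f(S)=0$ assumption to force many $(j+1)$-subsets of $T$ to also be 0-sets (using compatibility with partitions involving $\le j$-sets together with Lemma~\ref{lm:unions}(c)), and then combine these via Lemma~\ref{lm:unions}(d) with further compatibility to produce a new partition yielding the contradictory value.
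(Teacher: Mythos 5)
There is a genuine gap: your argument, as you yourself note, only derives a contradiction from the assumption $f(S)=0$ when $n\le 4j$ (the range $n\le 3j+1$ via the splitting of $T$, and $3j+2\le n\le 4j$ via the double computation of $f(A_1\triangle A_2)$). The case $n\ge 4j+1$ is left entirely open, and this is exactly the regime that matters for the application of the lemma (it is invoked with $j=2$ and $n$ arbitrary), so the missing case cannot be dismissed. The closing sketch --- forcing many $(j+1)$-subsets of $T$ to be $0$-sets and combining them via Lemma~\ref{lm:unions}(d) --- is not carried out, and it is not clear how to force even one further $0$-set inside $T$: compatibility of $S$ with a $(j+1)$-subset $S'\subseteq T$ and its complement does not pin down $f(S')$.

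The ingredient you are missing is an induction that grows sets one element at a time rather than trying to cover $[n]$ by boundedly many small pieces. Writing the putative non-$1$-set as $Y=X\cup\{y\}$ with $|X|=j$ (so $f(Y)=0$ by your step two), one shows by induction on $|Z|$ that \emph{every} $Z\subseteq[n]\setminus Y$ is a $1$-set: for the step, $Z\cup\{y\}=Z'\cup\{y,z\}$ is a $0$- or $1$-set by Lemma~\ref{lm:unions}(c) (using $j\ge 2$ for the pair $\{y,z\}$), so compatibility with the $1$-set $X$ makes $[n]\setminus(Y\cup Z)$ a $0$- or $2$-set, and compatibility of that with the $0$-set $Y$ forces $f(Z)=1$. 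Taking $Z=[n]\setminus Y$ then contradicts compatibility applied to $\emptyset$ and $Y$. This works uniformly in $n$ and is the route the paper takes; your partial argument is correct as far as it goes but does not reach a proof.
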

	
	\begin{proof}
	Clearly $j < n$ as $[n]$ is a 2-set. Assume, for a contradiction, that for some $j$-element $X$ and $y \not\in X$, the set $Y := X \cup \{y\}$ is not a 1-set. Since $X$ and $\{y\}$ are 1-sets, then $Y$ is a 0-set by item (c) of Lemma~\ref{lm:unions}.
	
	We prove by induction on $i$ that every set $Z$ of size $i$ disjoint with $Y$ is a 1-set. The base case of the induction may be, e.g., $i=0$ (or $i=j$). For the induction step, consider an $(i+1)$-element $Z$ disjoint from $Y$ and write $Z = Z' \cup \{z\}$ where $|Z'|=i$. By the induction hypothesis $Z'$ is a 1-set. The set $\{y,z\}$ is a 1-set as well by assumption (note that $j \geq 2$). Therefore $Z' \cup \{y,z\} = Z \cup \{y\}$ is a 0-set or 1-set by item (c) of Lemma~\ref{lm:unions}. By compatibility applied to $X$ and $Z \cup \{y\}$, the complement $W = [n] \setminus (X \cup Z \cup \{y\}) = [n] \setminus (Y \cup Z)$ is a 0-set or 2-set. But then, by compatibility applied to $Y$ (a 0-set) and $W$, $Z$ is a 1-set, as required.
	
	We have proved that $[n] \setminus Y$ is a 1-set, a contradiction to compatibility applied to $\emptyset$ and $Y$
	\end{proof}

	\begin{lemma}\label{lm:d2l2} If $f(\emptyset)=1$, then there exists a $0$-set or a $2$-set of size at most 2.  
	\end{lemma}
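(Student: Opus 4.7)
The plan is to first observe that applying compatibility to the partition $\emptyset, \emptyset, [n]$ forces $(1,1,f([n])) \in R$, and since the only triple in the relation of $\rel{D}_2^+$ of the form $(1,1,\ast)$ is $(1,1,2)$, we obtain $f([n]) = 2$. In particular $[n]$ is itself a $2$-set of size $n$, which already settles the case $n=1$ (the singleton $\{1\} = [n]$ is a $2$-set of size $1$).

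For $n \geq 2$ I would argue by contradiction: suppose every $0$-set and every $2$-set has size at least $3$. Then every singleton and every two-element subset of $[n]$ is a $1$-set, so the hypothesis of Lemma \ref{lemma:successor} is satisfied at $j=2$. I would then iterate that lemma. At each step the lemma both asserts $j<n$ and promotes the hypothesis ``all at most $j$-element subsets are $1$-sets'' to the same statement for $j+1$. Starting from $j=2$ and iterating up to $j = n-1$ yields that every subset of $[n]$ is a $1$-set, contradicting $f([n]) = 2$. Note the $j<n$ clause of Lemma \ref{lemma:successor} also dispatches the boundary case $n=2$ automatically, since the very first invocation at $j=2$ would already demand $2 < 2$.

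The main obstacle of the whole argument has really been absorbed into Lemma \ref{lemma:successor}; what remains is a short inductive wrap-up together with the two elementary observations that $[n]$ must be a $2$-set and that the small-arity corner cases are harmless. No further structural properties of $f$ are needed beyond the propagation rules already available.
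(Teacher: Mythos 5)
Your proof is correct and is essentially the paper's own argument: the paper likewise notes that $[n]$ must be a $2$-set, assumes for contradiction that all sets of size at most $2$ are $1$-sets, and invokes Lemma~\ref{lemma:successor} (iterated) to conclude that $[n]$ is a $1$-set, a contradiction. Your version merely makes the induction and the small-$n$ corner cases explicit.
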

	
	\begin{proof}
	In the opposite case, every set of coordinates of size at most 2 is a 1-set. It would then follow from Lemma~\ref{lemma:successor} that $[n]$ is a 1-set, a contradiction.
	\end{proof}

	Equipped with these lemmata, we can now proceed to our main argument for this section.
	
	\begin{theorem}
	$\PCSP(\rel{1in3},\rel{D}_2^+)$ is NP-hard.
	\end{theorem}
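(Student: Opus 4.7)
The plan is to apply Theorem~\ref{thm:wrochna}, following the approach used for $\rel{D}_1^+$. The structural backbone is the observation that $R$ contains no tuple of the form $(2,2,\ast)$, so no polymorphism has two disjoint 2-sets. Since $f^{\alpha}(\emptyset)=f(\emptyset)$ for every minor, each chain of minors preserves the value of the polymorphism at $\emptyset$. I therefore split into Case~A (every $f_i$ in the chain has $f_i(\emptyset)=0$) and Case~B (every $f_i$ has $f_i(\emptyset)=1$); in both cases the target is to apply Theorem~\ref{thm:wrochna} with $k=3$ and $l=2$.

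In Case~A, define $\sel(f)$ to be a singleton 2-set if one exists, and otherwise a singleton 1-set (guaranteed by Lemma~\ref{lm:d2l1}); this partitions polymorphisms into two sub-types. For a chain with $l=2$, pigeonhole over the two sub-types produces $0\leq i<j\leq 2$ with $\sel(f_i)$ and $\sel(f_j)$ of the same sub-type. If both are 2-sets, then $\alpha_{i,j}^{-1}(\sel(f_j))$ is a 2-set of $f_i$ and meets $\sel(f_i)$ by the no-disjoint-2-sets property. If both are 1-sets, then $f_i$ itself has no singleton 2-set, so Lemma~\ref{lm:d2l1} forbids two disjoint 1-sets and again forces intersection.

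In Case~B, the plan is to choose $\sel(f)$ to always be a 2-set of size at most 3. Then for every $i<j$ the set $\alpha_{i,j}^{-1}(\sel(f_j))$ is a 2-set of $f_i$, and no-disjoint-2-sets immediately forces the required intersection; no pigeonhole step is needed. Prioritize $\sel(f)$ as a singleton 2-set, then as a pair 2-set, and, when no 2-set of size at most 2 exists, as a triple 2-set. To validate this choice, the following structural claim is needed: if $f(\emptyset)=1$, then $f$ has a 2-set of size at most 3.

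This claim is the main obstacle. Suppose for contradiction no 2-set has size $\leq 3$. Lemma~\ref{lm:d2l2} provides a 0-set of size $\leq 2$. If two distinct singletons $\{a\},\{b\}$ are both 0-sets, then compatibility on $(\{a\},\{b\},[n]\setminus\{a,b\})$ forces $f([n]\setminus\{a,b\})=1$, and then compatibility on $(\{a,b\},[n]\setminus\{a,b\},\emptyset)$ forces $f(\{a,b\})=2$, a pair 2-set, contradicting the assumption. Hence at most one singleton is a 0-set. In either remaining case (a unique singleton 0-set, or all singletons 1-sets), compatibility on $(\{x\},\{y\},[n]\setminus\{x,y\})$ for all pairs yields that every pair is a 0-set and every set $[n]\setminus\{x,y\}$ is a 2-set, so $n\geq 6$. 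For $n=6$ the partition $(\{a,b\},\{c,d\},\{e,f\})$ yields $(0,0,0)\notin R$, a contradiction. For $n\geq 7$, compatibility on $(\{a,b\},\{c,d\},[n]\setminus\{a,b,c,d\})$ gives $f([n]\setminus\{a,b,c,d\})=1$, and then $(\{a,b,c\},\{d\},[n]\setminus\{a,b,c,d\})$ gives $f(\{a,b,c\})=2$, a triple 2-set, contradicting the assumption. Small values $n\leq 5$ are dispatched by direct partition contradictions such as $(\{1\},\{2\},\{3\})=(1,1,1)\notin R$. With the claim established, Theorem~\ref{thm:wrochna} applies with $k=3$ and $l=2$.
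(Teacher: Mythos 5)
Your Case~A ($f(\emptyset)=0$) is fine and essentially reproduces the paper's treatment of that half: singleton $2$-sets pull back to $2$-sets, which cannot be disjoint, and otherwise Lemma~\ref{lm:d2l1} supplies a singleton $1$-set and forbids disjoint $1$-sets. The problem is Case~B. Your entire strategy there rests on the claim that every polymorphism with $f(\emptyset)=1$ has a $2$-set of size at most $3$, and the argument you give for this claim has a false step. From the partition $(\{x\},\{y\},[n]\setminus\{x,y\})$ you can only conclude something about $f(\{x\})$, $f(\{y\})$ and $f([n]\setminus\{x,y\})$; it says nothing about $f(\{x,y\})$. When both singletons are $1$-sets you do get that the co-pair is a $2$-set, but then the partition $(\{x,y\},[n]\setminus\{x,y\},\emptyset)$ gives the multiset $\{f(\{x,y\}),2,1\}$, and since $(1,1,2)\in R$ this only forces $f(\{x,y\})\in\{0,1\}$ --- not $f(\{x,y\})=0$. (For pairs containing the unique singleton $0$-set, even the ``co-pair is a $2$-set'' step fails, since $\{0,1,z\}$ allows $z\in\{0,2\}$.) So ``every pair is a $0$-set'' is unjustified, and everything downstream of it (the $n\geq 6$ bound, the three-disjoint-$0$-pairs contradiction, the triple $2$-set for $n\geq 7$) collapses. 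What you can legitimately extract is only that not every pair is a $1$-set (via Lemma~\ref{lemma:successor}), i.e.\ some pair is a $0$-set, which is exactly Lemma~\ref{lm:d2l2} and does not yield a small $2$-set.

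This is not a cosmetic slip: the missing claim is precisely the difficulty that forces the paper into its more elaborate argument. When $f(\emptyset)=1$ and there is no small $2$-set, the paper can only guarantee a \emph{$0$-set} of size at most $2$ (Lemma~\ref{lm:d2l2}), and a single such $0$-set gives nothing; one needs \emph{two disjoint} small $0$-sets to manufacture a $2$-set via Lemma~\ref{lm:unions}(d), and then \emph{two disjoint such unions} to reach a contradiction. That is why the paper selects $0$-sets for its type~3, takes $l=5$, and plays four type-3 polymorphisms of the chain against each other. (Indeed, if two disjoint $0$-set pairs existed inside one polymorphism you could finish quickly --- their union is a $2$-set and its complement a $1$-set, whence a $3$-element $2$-set follows --- but nothing rules out that all small $0$-sets of a single polymorphism pairwise intersect, and your argument does not address this case.) To repair the proof you would either have to prove your structural claim by a genuinely different argument, or adopt the paper's device of increasing $l$ and intersecting pulled-back $0$-sets across several polymorphisms in the chain.
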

	
	\begin{proof}
	We apply Theorem~\ref{thm:wrochna} with $k=2$ and $l=5$.  We assign to a polymorphism its \emph{type} and define $\sel(f)$ as follows.
	
	\begin{itemize}
	    \item Type 1: $f$ has a 2-set $X$ of size at most 2. In this case we set $\sel(f) = X$. 
	    \item Type 2: $f$ has no 2-set of size at most 2, $f(\emptyset)=0$, and $\{x\}$ is a 1-set for some $x \in [n]$. We set $\sel(f) = \{x\}$. 
	    \item Type 3: $f$ has no 2-set of size at most 2, $f(\emptyset)=1$, and $f$ has a 0-set $X$ of size at most 2. We set $\sel(f) = X$. 
	\end{itemize}
	Note that $\emptyset$ cannot be a 2-set. The first part of Lemma~\ref{lm:d2l1} and Lemma~\ref{lm:d2l2} then guarantee that every polymorphism is of one of the three types. 
	
    Let $(f_0, \alpha_{0,1}, \dots, f_l)$ be a chain of minors consisting of polymorphisms. Note that $f_i(\emptyset)$ does not depend on $i$, therefore types 2 and 3 do not simultaneously occur in the chain. If, for some $i<j$, both $f_i$ and $f_j$ have type 1, then $\sel(f_i)$ and $\alpha_{i,j}^{-1}(\sel(f_j))$ are both 2-sets, so they have a nonempty intersection. Similarly, if two polymorphisms in this chain have type 2, then we obtain a nonempty intersection by the second part of Lemma~\ref{lm:d2l1}. 
    
    Otherwise, since $l = 5$, the chain contains four polymorphisms $f_{i_1}$, $f_{i_2}$, $f_{i_3}$, $f_{i_4}$ of type 3 (where $i_1 < i_2 < i_3 < i_4$). Let $X_1 = \sel(f_{i_1})$ and $X_j = \alpha_{i_1,i_j}^{-1}(\sel(f_{i_j}))$ for $j = 2,3,4$. 
    These four sets are 0-sets (as preimages of 0-sets). If they are pairwise disjoint, then $X_1 \cup X_2$ and $X_3 \cup X_4$ are disjoint sets, which are 2-sets by item (d) in Lemma~\ref{lm:unions}, a contradiction.
    Therefore, two of these sets, say $X_j$ and $X_{j'}$, have a nonempty intersection. But then $Y := \sel(f_{i_j})$ and $Z := \alpha_{i_j,i_{j'}}^{-1}(\sel(f_{i_j'}))$ also have a nonempty intersection as $X_j = \alpha_{i_1,i_j}^{-1}(Y)$ and $X_{j'} = \alpha_{i_1,i_{j}}^{-1}(Z)$. 
    \footnote{The last part of the argument applies Theorem~\ref{thm:wrochna} in a similar way as in~\cite{BWZ20}, see their ``smug sets'' Corollary 4.2.}
\end{proof}

\section{$\rel{T}_1$}\label{sec:T1}
%If f has five or less disjoint one sets, choose one sets
%If f has more than five disjoint one sets, choose two sets

    In this section we prove the NP-hardness of $\PCSP(\rel{1in3},\rel{T}_1)$, where $\rel{T}_1 = (\{0,1,2\},R)$ and $R$ consists of all the permutations of the tuples $(0,0,1)$, $(0,0,2)$, and $(1,1,2)$.

	Let $f: \{0,1\}^n \to \{0,1,2\}$ be a polymorphism of $(\rel{1in3},\rel{T}_1)$. Note that, as in the previous cases, $f$ cannot have two disjoint 2-sets.
	In particular, $\emptyset$ is a 0-set or a 1-set. 
	The following simple lemma will be useful in both cases.
	
	\begin{lemma} \label{lm:subunion}
	If $Z \subseteq X \cup Y$, and $X$ and $Y$ are 1-sets, then $Z$ is not a 2-set.
	\end{lemma}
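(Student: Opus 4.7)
I would prove the lemma by contradiction, producing a second 2-set disjoint from $Z$ and invoking the observation (noted at the start of this section) that $f$ has no pair of disjoint 2-sets. The natural candidate for such a second 2-set is the complement $[n] \setminus (X \cup Y)$.

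The key step is to apply compatibility to the partition $(X, Y, [n] \setminus (X \cup Y))$ of $[n]$ into three disjoint blocks: this yields $(f(X), f(Y), f([n] \setminus (X \cup Y))) = (1, 1, f([n] \setminus (X \cup Y))) \in R$. The only multiset in $R$ containing two copies of $1$ is $\{1, 1, 2\}$, so $f([n] \setminus (X \cup Y)) = 2$, i.e., the complement is itself a 2-set. If the complement happens to be empty then $f(\emptyset) = 2$, but applying compatibility to $(\emptyset, \emptyset, [n])$ would then force a tuple of shape $(2, 2, \cdot)$ into $R$, which is impossible; so the complement is automatically nonempty in our setting.

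Now $Z$ and $[n] \setminus (X \cup Y)$ are two 2-sets, and they are disjoint because $Z \subseteq X \cup Y$, contradicting the no-two-disjoint-2-sets observation. This contradiction shows $Z$ cannot have been a 2-set. The proof is essentially a single application of compatibility; the only small subtlety is treating the edge case where $X \cup Y$ exhausts $[n]$, which the $f(\emptyset)$ side-check dispatches in one line. There is no real obstacle beyond spotting the right partition.
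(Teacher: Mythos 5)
Your proof is correct and takes essentially the same route as the paper's: compatibility applied to the partition $(X, Y, [n] \setminus (X \cup Y))$ forces the complement to be a $2$-set, which is disjoint from $Z$ since $Z \subseteq X \cup Y$, so $Z$ cannot be a $2$-set. The empty-complement side-check is harmless but not needed, as the ``no two disjoint $2$-sets'' observation already covers that case.
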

	\begin{proof}
	By compatibility applied to $X$ and $Y$, the complement $[n] \setminus (X \cup Y)$ is a 2-set. Since it is disjoint with $Z$, $Z$ cannot be a 2-set. 
	\end{proof}

	For the case $f(\emptyset)=0$ we introduce some notation. 
	We define $r: \{0,1,2\} \to \{0,1\}$ by $0 \mapsto 0$ and $1,2 \mapsto 1$, and set 
	\[
	E(f) = \{x \in [n]: r(f(\{x\}) = 1\}, \mbox{ and }
	I(f) = \{x \in [n]: r(f(\{x\}) = 0\} = [n] \setminus E(f).
	\]
	
	\begin{lemma} \label{lm:linear}
	The size of $E(f)$ is odd and,
	for any $X \subseteq [n]$, we have $r(f(X)) = |X \cap E(f)| \mod 2$. 
	\end{lemma}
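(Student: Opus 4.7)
The plan is to reduce the lemma to a statement about affine Boolean functions. The key observation is that applying $r$ coordinatewise to the three generating tuples of $\rel{T}_1$—namely $(0,0,1)$, $(0,0,2)$, and $(1,1,2)$—yields $(0,0,1)$, $(0,0,1)$, and $(1,1,1)$. These, together with their permutations, are precisely the four Boolean triples $(a,b,c) \in \{0,1\}^3$ with $a+b+c \equiv 1 \pmod 2$. Consequently, for every partition $[n] = X \cup Y \cup Z$, compatibility of $f$ forces
\[ r(f(X)) + r(f(Y)) + r(f(Z)) \equiv 1 \pmod 2. \]
In other words, $g := r \circ f$ is a polymorphism of $\rel{1in3}$ into the affine Boolean relation, and the lemma reduces to the assertion that $g$ must be an affine $\mathbb{F}_2$-function determined by its values on singletons.

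From here I would prove the formula $r(f(X)) \equiv |X \cap E(f)| \pmod 2$ by induction on $|X|$. The cases $|X|=0$ and $|X|=1$ are immediate: the assumption $f(\emptyset)=0$ handles the base case, and singletons are handled by the very definition of $E(f)$. For $|X| \geq 2$, pick any decomposition $X = X_1 \sqcup X_2$ into disjoint nonempty parts (both then of size strictly less than $|X|$). Applying the parity identity above to the two partitions $(X, \emptyset, [n] \setminus X)$ and $(X_1, X_2, [n] \setminus X)$ and subtracting modulo $2$, while recalling that $r(f(\emptyset))=0$, yields
\[ r(f(X)) \equiv r(f(X_1)) + r(f(X_2)) \pmod 2, \]
after which the inductive hypothesis closes the argument since $|X_1 \cap E(f)| + |X_2 \cap E(f)| = |X \cap E(f)|$.

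For the parity of $|E(f)|$, I would apply the same identity to the partition $([n], \emptyset, \emptyset)$: this immediately gives $r(f([n])) \equiv 1 \pmod 2$, and substituting $X = [n]$ into the formula just established yields $|E(f)| \equiv 1 \pmod 2$. The proof is essentially mechanical once the reduction to affine functions is in hand; the only conceptual step—and hence the main (minor) obstacle—is spotting that $r$ collapses the relation of $\rel{T}_1$ precisely onto the affine-parity relation on $\{0,1\}^3$.
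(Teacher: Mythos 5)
Your proof is correct and follows essentially the same route as the paper: both establish that $r\circ f$ is additive over disjoint unions (your subtraction of two parity identities is just a repackaging of the paper's ``union argument''), conclude the formula by peeling off singletons, and get the parity of $|E(f)|$ from the fact that $[n]$ cannot be a $0$-set. The observation that $r$ maps the relation of $\rel{T}_1$ onto the affine parity relation is a tidy way of phrasing the compatibility step, but it does not change the substance of the argument.
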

	\begin{proof}
	By the ``union argument'' as in the proof of Lemma~\ref{lm:unions}, we get that for any two disjoint $Y$ and $Z$, $r(f(Y \cup Z)) = r(f(Y)) + r(f(Z))$, where the addition is modulo 2. It then follows (by adding to $\emptyset$ singletons from $X$) that $r(f(X)) = \sum_{x \in X} r(f(\{x\})) = |X \cap E(f)| \pmod 2$. 
	
	In particular, $r(f([n])) = |E(f)| \pmod 2$. But $[n]$ cannot be a 0-set (by compatibility applied to $\emptyset$ and $\emptyset$), so $|E(f)|$ is odd.
	\end{proof}
	
	\begin{lemma} \label{lm:addIf}
	Suppose that $f(\emptyset)=0$ and $f$ does not have any 2-sets of size 2. If $X$ is a 1-set such that $E(f) \setminus X$ is nonempty, then $X \cup I(f)$ is a 1-set.
	\end{lemma}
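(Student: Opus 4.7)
My plan is a proof by contradiction. Let $X_E := X \cap E(f)$ and $X_I := X \cap I(f)$, so $X = X_E \cup X_I$ and $W := X \cup I(f) = X_E \cup I(f)$. Since $X$ is a 1-set, $|X_E| = |X \cap E(f)|$ is odd by Lemma~\ref{lm:linear}, and therefore $|W \cap E(f)| = |X_E|$ is odd as well, giving $f(W) \in \{1, 2\}$. I will assume $f(W) = 2$ and derive a contradiction with the hypothesis that $f$ has no 2-sets of size 2.

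Set $S := E(f) \setminus X$, which is nonempty by assumption and of even size (since $|E(f)|$ and $|X_E|$ are both odd, we have $|S| \geq 2$). For any $y \in S$, compatibility on the partition $(W, \{y\}, S \setminus \{y\})$ forces a triple $(2, ?, ?) \in R$ with both $?$'s in $\{1, 2\}$ by parity; the only such tuple is $(2, 1, 1)$, so $f(\{y\}) = 1$ and $f(S \setminus \{y\}) = 1$. With $V_y := (I(f) \setminus X_I) \cup (S \setminus \{y\})$, the partition $(X, \{y\}, V_y)$ yields the triple $(1, 1, ?)$, forcing $f(V_y) = 2$; then $(V_y, \{y\} \cup X_I, X_E)$ yields $(2, ?, ?)$, forcing $f(X_E) = 1$ (and also $f(\{y\} \cup X_I) = 1$).

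The hypothesis on 2-sets enters here. For any $z \in I(f)$ and $y \in S$, compatibility on the partition $(X_E, \{z, y\}, (I(f) \setminus \{z\}) \cup (S \setminus \{y\}))$ gives a triple $(1, ?, ?) \in R$, and the two possibilities are $(1, 1, 2)$ and $(1, 2, 1)$. Since $|\{z, y\}| = 2$, the option $f(\{z, y\}) = 2$ would produce a forbidden 2-set of size 2, so $f(\{z, y\}) = 1$.

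Finally, I prove by induction on $|Z|$ that $W \setminus Z$ is a 2-set for every $Z \subseteq I(f)$, the base case $Z = \emptyset$ being the assumption. For the inductive step, the partition $(W \setminus Z, \{y\}, Z \cup (S \setminus \{y\}))$ yields a triple $(2, 1, ?)$, forcing $f(Z \cup (S \setminus \{y\})) = 1$; then $(W \setminus (Z \cup \{z\}), \{z, y\}, Z \cup (S \setminus \{y\}))$, combined with $f(\{z, y\}) = 1$ from the previous paragraph, yields $(?, 1, 1) \in R$, forcing $f(W \setminus (Z \cup \{z\})) = 2$. Taking $Z = I(f)$ gives $f(X_E) = 2$, contradicting $f(X_E) = 1$ established above. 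The main obstacle is spotting the partition in the third paragraph: it is the only place where the no-size-$2$-$2$-sets hypothesis is used, and the induction is just a mechanism for transporting the 2-set status of $W$ down through $I(f)$ to reach the 1-set $X_E$.
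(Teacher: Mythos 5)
Your proof is correct, but it takes a noticeably different route from the paper's. The paper argues directly and upward: it adds the elements of $I(f)$ to $X$ one at a time, notes by Lemma~\ref{lm:linear} that each $X\cup\{y\}$ is a 1-set or a 2-set, and kills the 2-set option in one line via Lemma~\ref{lm:subunion} applied to the two 1-sets $X$ and $\{y,z\}$ (with $z\in E(f)\setminus X$). You instead assume $W=X\cup I(f)$ is a 2-set and run a downward induction, peeling elements of $I(f)$ off $W$ while transporting the 2-set property, until you reach $X_E$ and contradict the value $f(X_E)=1$ that you computed separately. The two arguments share their crucial ingredient -- the pairs $\{y,z\}$ with $y\in I(f)$ and $z\in E(f)\setminus X$ are forced to be 1-sets precisely by the no-size-2 2-sets hypothesis -- but they deploy it differently: the paper feeds it into Lemma~\ref{lm:subunion}, whereas you feed it into explicit compatibility computations on carefully chosen three-block partitions. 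The paper's version buys brevity by reusing Lemma~\ref{lm:subunion}; yours is self-contained apart from Lemma~\ref{lm:linear}, at the cost of several auxiliary partitions whose disjointness must each be checked (they do all check out). If you want to shorten your argument, note that once you know $X$ and $\{y,z\}$ are both 1-sets, compatibility already makes $[n]\setminus(X\cup\{y,z\})$ a 2-set, and since no two 2-sets are disjoint, nothing inside $X\cup\{y,z\}$ -- in particular $X\cup\{y\}$ -- can be a 2-set; this collapses your paragraphs two through four into the paper's single step.
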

	\begin{proof}
	It is enough to show that $X \cup \{y\}$ is a 1-set for any $y \in I(f)$, as the claim then follows by induction.
	By Lemma~\ref{lm:linear}, $X \cup \{y\}$ is a 1-set or 2-set, so it is enough to exclude the latter option.
	Take $z \in E(f) \setminus X$. By Lemma~\ref{lm:linear}, $\{y,z\}$ is a 1-set or a 2-set, therefore it is a 1-set by the assumption. But then $X \cup \{y\}$ is not a 2-set by Lemma~\ref{lm:subunion}.
	\end{proof}
	
	\begin{lemma} \label{lm:sizes}
	Assume that $f(\emptyset)=0$ and $f$ does not have any singleton 2-set. If $X \subseteq E(f)$ is a 1-set, then, for any $Y \subseteq E(f)$ with $|Y|=|X|$, $Y$ is a 1-set.
	\end{lemma}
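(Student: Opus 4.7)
The plan is to reduce to a single-element swap and then induct. First, Lemma~\ref{lm:linear} gives $r(f(Y)) = |Y \cap E(f)| \bmod 2 = |Y| \bmod 2 = |X| \bmod 2 = r(f(X)) = 1$, so $Y$ is either a 1-set or a 2-set; the task is to exclude the latter. I would induct on $m = |X \setminus Y|$ ($= |Y \setminus X|$, since $|X|=|Y|$). The base case $m=0$ is trivial. For the inductive step, pick $b \in X \setminus Y$ and $c \in Y \setminus X$, set $X' = (X \setminus \{b\}) \cup \{c\}$, observe $X' \subseteq E(f)$ and $|X'| = |X|$ and $|X' \setminus Y| = m-1$, and reduce to showing that $X'$ is a 1-set: the inductive hypothesis applied to the pair $(X', Y)$ then finishes the argument.

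For the single-swap claim, I would use two partitions of $[n]$ that share a common complement. First, since $b, c \in E(f)$ we have $r(f(\{b\})) = r(f(\{c\})) = 1$, and by the no-singleton-2-set hypothesis $\{b\}$ and $\{c\}$ are both 1-sets. Consider the partition $(X, \{c\}, Z)$ where $Z = [n] \setminus (X \cup \{c\})$. Since $X$ and $\{c\}$ are 1-sets, compatibility forces $(1, 1, f(Z)) \in R$, i.e.\ $f(Z) = 2$; in particular $Z \neq \emptyset$, because the tuple $(1,1,0)$ does not belong to $R$ and $f(\emptyset) = 0$.

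Next consider the partition $(X', \{b\}, Z')$ where $Z' = [n] \setminus (X' \cup \{b\})$. A direct computation gives $X' \cup \{b\} = A \cup \{b, c\} = X \cup \{c\}$, where $A = X \cap X'$, so $Z' = Z$ is the same 2-set from the previous step. Compatibility therefore forces $(f(X'), 1, 2) \in R$, and inspecting the three permutation classes of tuples in $R$ shows that the only option is $f(X') = 1$, completing the single-swap argument.

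The main obstacles are bookkeeping rather than conceptual: one must verify that $X'$ remains inside $E(f)$ at each swap (immediate, as $b$ is removed and $c \in E(f)$ is added), and must confirm that the common complement $Z$ is nonempty so that compatibility actually yields a contradiction in $R$ rather than collapsing to the forbidden tuple $(1,1,0)$. Both are handled by the observations above, so the induction runs cleanly and the lemma follows.
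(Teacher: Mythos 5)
Your proof is correct and follows essentially the same route as the paper: reduce to a single swap of one element of $E(f)$ for another and induct, with the swap step justified by observing that the complement of the union of the 1-set with a singleton 1-set must be a 2-set. The only cosmetic difference is that you verify the swap step directly from the two partitions sharing the complement $Z$ (and from the structure of $R$), where the paper instead cites Lemma~\ref{lm:subunion} together with Lemma~\ref{lm:linear}; your choice of $b \in X\setminus Y$, $c \in Y\setminus X$ also keeps the bookkeeping clean.
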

	\begin{proof}
	We will show that $Z := X \setminus \{x\} \cup \{y\}$ is a 1-set for any $x \in X$ and $y \in E(f)$. The claim will then follow by induction since any set $Y$ can be obtained from $X$ by a sequence of such ``swaps''. 
	
	By Lemma~\ref{lm:linear} and the non-existence of singleton 2-sets, $|X|$ is odd, $Z$ is a 1-set or 2-set, and $\{y\}$  is a 1-set (it cannot be a 2-set by the assumption of the lemma). Since $Z \subseteq X \cup \{y\}$, then $Z$ is a 1-set by Lemma~\ref{lm:subunion}.
	\end{proof}

	\begin{lemma} \label{lm:smallEf}
	If $f(\emptyset)=0$ and $f$ does not have any 2-sets of size at most 2, then $|E(f)| \leq 5$.
	\end{lemma}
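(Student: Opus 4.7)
The plan is to argue by contradiction: assume $|E(f)| \geq 7$, and use the oddness from Lemma~\ref{lm:linear} to write $|E(f)| = 2k+1$ with $k \geq 3$. Lemma~\ref{lm:sizes} already organizes the subsets of $E(f)$ by size into a clean dichotomy; for every odd $s$ with $1 \leq s \leq |E(f)|$, all size-$s$ subsets of $E(f)$ share a common $f$-value, which is either $1$ or $2$. I split on this dichotomy restricted to the sizes $s \leq k$, which are precisely the sizes for which $E(f)$ still admits two disjoint size-$s$ subsets.

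In the first case I suppose there exists some odd $s$ with $3 \leq s \leq k$ for which every size-$s$ subset of $E(f)$ is a 2-set. Since $2s \leq 2k < |E(f)|$, I can pick two disjoint such size-$s$ subsets inside $E(f)$, obtaining two disjoint 2-sets. As in previous sections this contradicts compatibility: the relation $R$ of $\rel{T}_1$ contains no tuple of the form $(2,2,\ast)$.

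In the remaining case every odd $s \leq k$ is a ``1-set size''. The idea is to construct a partition $[n] = X_1 \sqcup X_2 \sqcup X_3$ into three 1-sets, which compatibility rules out since $(1,1,1) \notin R$. For the construction I first decompose $|E(f)| = s_1 + s_2 + s_3$ into three odd positive integers each at most $k$ (always possible when $k \geq 3$: take $k+k+1$ if $k$ is odd, or $(k-1)+(k-1)+3$ if $k$ is even), then partition $E(f) = A_1 \sqcup A_2 \sqcup A_3$ with $|A_i| = s_i$, and finally set $X_1 = A_1$, $X_2 = A_2$, $X_3 = A_3 \cup I(f)$. The case hypothesis makes every $A_i$ a 1-set, and Lemma~\ref{lm:addIf} promotes $A_3$ to the full $X_3 = A_3 \cup I(f)$ as a 1-set because $E(f) \setminus A_3 \neq \emptyset$ (using $s_3 \leq k < |E(f)|$). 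Compatibility applied to $(X_1, X_2, X_3)$ then yields the forbidden triple $(1,1,1)$.

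The only step requiring any real care is the arithmetic decomposition of $|E(f)|$ into three odd summands inside $[1,k]$; this is the single place where the threshold $|E(f)| \geq 7$ is used. For $|E(f)| \in \{1,3,5\}$ one has $k \in \{0,1,2\}$ and no such decomposition exists, which is precisely why the bound in the statement is $|E(f)| \leq 5$ rather than something smaller.
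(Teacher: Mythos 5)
Your proof is correct and follows essentially the same route as the paper's: both establish (via Lemma~\ref{lm:linear}, Lemma~\ref{lm:sizes}, and the no-two-disjoint-2-sets observation) that every odd-size subset of $E(f)$ of size at most $|E(f)|/2$ is a 1-set, then split $E(f)$ into three such pieces, absorb $I(f)$ into one of them using Lemma~\ref{lm:addIf}, and derive the forbidden triple of three 1-sets. The only difference is presentational -- you phrase the first step as an explicit dichotomy over sizes and spell out the arithmetic of the three-part odd decomposition, which the paper leaves implicit.
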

	\begin{proof}
	We first observe that any $X \subseteq E(f)$ of odd size $|X| \leq |E(f)|/2$ is a 1-set. Indeed, otherwise we can find $Y$ disjoint from $X$ of the same size. By Lemma~\ref{lm:linear} and Lemma~\ref{lm:sizes}, both $X$ and $X'$ are 2-sets, a contradiction.
	
	By Lemma~\ref{lm:linear}, the size $i := |E(f)|$ is odd. If $i > 5$, then $E(f)$ can be written as a disjoint union $E(f) = X \cup Y \cup Z$ of sets that have odd sizes smaller than $|E(f)|/2$. By the previous paragraph, all of these sets are 1-sets. But then, by Lemma~\ref{lm:addIf}, $Z \cup I(f) = [n] \setminus (X \cup Y)$ is a 1-set as well, a contradiction to compatibility applied to $X$ and $Y$. 
	\end{proof}
	
	We now consider the case that $\emptyset$ is a 1-set. Observe that $[n]$ is a 2-set in this case.
	
	\begin{lemma} \label{lm:nonidemp}
	If $f(\emptyset)=1$, then $f$ has a 2-set of size at most 2.
	\end{lemma}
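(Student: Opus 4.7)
The plan is to argue by contradiction: assume $f(\emptyset) = 1$ and that no set of size at most $2$ is a $2$-set. The partition $\emptyset \cup \emptyset \cup [n]$ forces $f([n]) = 2$ via the constraint $(1,1,f([n])) \in R$. Since no singleton is a $2$-set, each is a $0$-set or a $1$-set; let $A_0 = \{i : f(\{i\}) = 0\}$ and $A_1 = \{i : f(\{i\}) = 1\}$, so $A_0$ and $A_1$ are disjoint and cover $[n]$. For any pair $\{u,v\}$, combining the partitions $\{u\} \cup \{v\} \cup ([n] \setminus \{u,v\})$ and $\{u,v\} \cup \emptyset \cup ([n]\setminus\{u,v\})$ with the assumption that $\{u,v\}$ is not a $2$-set yields, after a short case analysis, $f(\{u,v\}) = 1$ if $u,v$ lie in the same $A_i$, and $f(\{u,v\}) = 0$ otherwise.

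I then handle the degenerate cases. If $A_0 = \emptyset$, an induction on $|X|$ using the split $X = X' \cup \{y\}$ (with $f(\{y\}) = 1$, $f(X') = 1$ by the inductive hypothesis, and the relation $(1,1,c) \in R \Rightarrow c = 2$) shows that every nonempty $X \subsetneq [n]$ is a $1$-set, so the partition $\{1\} \cup ([n]\setminus\{1\}) \cup \emptyset$ produces the forbidden triple $(1,1,1)$. If $A_1 = \emptyset$, a parity induction (splitting $X$ into a pair plus the rest) shows that $f(X) = 1$ if $|X|$ is even and $f(X) = 0$ if $|X|$ is odd, for every $\emptyset \ne X \subsetneq [n]$; picking a partition of $[n]$ into three parts of suitable sizes (depending on $n \bmod 2$) then produces a triple outside $R$.

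The remaining case is $A_0, A_1 \neq \emptyset$. The induction of the previous paragraph, now restricted to subsets of $A_1$, gives $f(A_1) = 1$. If $|A_0| = 1$, the partition $A_0 \cup A_1 \cup \emptyset$ yields $(0,1,1) \notin R$. If $|A_0| \geq 2$, pick distinct $x_1, x_2 \in A_0$ and consider the partition $\{x_1, x_2\} \cup A_1 \cup (A_0 \setminus \{x_1, x_2\})$; the first two parts are $1$-sets. When $|A_0| = 2$, the third part is $\emptyset$ with value $1$, giving the forbidden $(1,1,1)$. When $|A_0| \geq 3$, an auxiliary parity induction on nonempty proper subsets of $A_0$ (using that pairs in $A_0$ are $1$-sets and splitting larger subsets as a pair plus the rest) shows $f(A_0 \setminus \{x_1,x_2\}) \in \{0,1\}$; since the only triples in $R$ of the form $(1,1,c)$ have $c = 2$, we again reach a contradiction.

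The main obstacle is the last sub-case $A_0, A_1 \neq \emptyset$ with $|A_0| \geq 3$. The parity pattern for $f$ on subsets of $A_0$ holds for every $X \subsetneq A_0$ but breaks at $X = A_0$ itself (where $f(A_0) = 2$, as follows from the partition $A_1 \cup \emptyset \cup A_0$), and one must verify carefully that the induction goes through for $X = A_0 \setminus \{x_1, x_2\}$, whose size ranges from $1$ up to $|A_0|-2$.
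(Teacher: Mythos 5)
Your proof is correct, and its engine is the same as the paper's: union/parity arguments showing that pairs inside $A_0$ and singletons outside $A_0$ are $1$-sets, that disjoint $1$-sets union to $1$-sets, and a final compatibility application to two large $1$-sets whose complement is thereby forced to be a $2$-set. The endgame differs, though. The paper derives the parity of $|A_0|$ (its $X$) from the global relation $r(f([n])) \equiv \sum_{x} r(f(\{x\})) + (n-1) \pmod 2$ and then applies compatibility to $X\setminus\{x\}$ and $[n]\setminus X$, forcing the singleton $\{x\}$ to be a $2$-set; you instead apply compatibility to $\{x_1,x_2\}$ and $A_1$, forcing $A_0\setminus\{x_1,x_2\}$ to be a $2$-set, which contradicts your parity induction on proper subsets of $A_0$. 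Your route has a concrete advantage: it does not depend on the parity of $|A_0|$. In fact, under the contradiction hypothesis $|X|$ is \emph{even}, not odd as the paper's write-up asserts (from $r(f([n]))=1$ and $\sum_x r(f(\{x\})) + (n-1) = (n-|X|)+(n-1) \equiv |X|+1$), so the paper's step writing $X\setminus\{x\}$ as a disjoint union of $2$-element subsets of $X$ does not go through literally; one would instead take $X$ itself and $[n]\setminus X$ as the two $1$-sets, with $\emptyset$ as the third block, after disposing of $X=\emptyset$ and $X=[n]$ separately --- exactly the degenerate cases your argument handles explicitly.
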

	\begin{proof}
	Assume, for a contradiction, that there are no 2-sets of size at most 2. 
	
	Union arguments in the case $f(\emptyset)=1$ give us $r(f(Y \cup Z)) = r(f(X)) + r(f(Y)) + 1 \pmod 2$ and, as in Lemma~\ref{lm:linear}, we obtain that $X = \{x \in [n]: f(\{x\})=0\}$ has an odd size and that, using additionally the ``no two-element 2-sets'' assumption, every two-element subset of $X$ is a 1-set. 
	
	Note that if $Y$ and $Z$ are disjoint 1-sets, then the union argument gives us a sharper result -- $Y \cup Z$ is a 1-set.
	It follows that $X \setminus \{x\}$, where $x \in X$ is an arbitrary element, is a 1-set (as it is a disjoint union of 2-element subsets of $X$) and $[n] \setminus X$ is a 1-set (as it is a disjoint union of singletons outside $X$, which are 1-sets by the ``no singleton 2-set'' assumption). Now compatibility applied to $X \setminus \{x\}$ and $[n] \setminus X$ gives us that $\{x\}$ is a 2-set, a contradiction. 
	\end{proof}

	\begin{theorem}
	$\PCSP(\rel{1in3},\rel{T}_1)$ is NP-hard.
	\end{theorem}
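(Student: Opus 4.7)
The plan is to apply Theorem~\ref{thm:wrochna} with $k = 5$ and $l = 2$, assigning to every polymorphism $f$ of $(\rel{1in3}, \rel{T}_1)$ a selector $\sel(f)$ of size at most five according to one of three mutually exclusive types. In Type~I, $f(\emptyset) = 1$, and Lemma~\ref{lm:nonidemp} supplies a 2-set of size at most $2$ to serve as $\sel(f)$. In Type~II, $f(\emptyset) = 0$ and $f$ has a 2-set $X$ of size at most $2$, which we again take as $\sel(f)$. In Type~III, $f(\emptyset) = 0$ and $f$ has no 2-set of size at most $2$; Lemma~\ref{lm:smallEf} then gives $|E(f)| \leq 5$, and we set $\sel(f) = E(f)$. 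Lemma~\ref{lm:nonidemp} guarantees that every polymorphism belongs to exactly one of these types.

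The key structural observation is that $f_i(\emptyset) = f_0(\alpha_{0,i}^{-1}(\emptyset)) = f_0(\emptyset)$ is constant along a chain of minors, so Type~I cannot appear together with Type~II or Type~III in the same chain. Given a chain $(f_0, \alpha_{0,1}, f_1, \alpha_{1,2}, f_2)$, pigeonhole on three polymorphisms over at most two available types produces some $i < j$ with $f_i$ and $f_j$ of the same type, reducing the task to checking three same-type intersection arguments.

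For a same-type pair in Type~I or Type~II the argument is identical: the relation of $\rel{T}_1$ contains no tuple with two occurrences of $2$, so no polymorphism has two disjoint 2-sets; since $\alpha_{i,j}^{-1}(\sel(f_j))$ remains a 2-set of $f_i$, it must meet $\sel(f_i)$. For a same-type pair in Type~III, the identity $f_j(\{y\}) = f_i(\alpha_{i,j}^{-1}(\{y\}))$ together with Lemma~\ref{lm:linear} shows that every $y \in E(f_j)$ satisfies $r(f_i(\alpha_{i,j}^{-1}(\{y\}))) = 1$, so the fiber $\alpha_{i,j}^{-1}(\{y\})$ contains an odd and hence positive number of elements of $E(f_i)$. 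Since $|E(f_j)|$ is odd by Lemma~\ref{lm:linear}, $E(f_j)$ is nonempty, and any such $y$ yields the desired element of $E(f_i) \cap \alpha_{i,j}^{-1}(E(f_j))$.

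The main obstacle is to accommodate these three structurally different kinds of selectors under a single uniform bound on $|\sel(f)|$ while avoiding a cross-type intersection argument between Types~II and~III, where the natural combinatorial tools do not obviously give a nonempty meet between a small 2-set and the preimage of $E(f_j)$. Taking $l = 2$ rather than $l = 1$ is what rescues us: pigeonhole on three polymorphisms with $f(\emptyset) = 0$ forces two of them into the same type, so only the three same-type cases above actually need to be verified.
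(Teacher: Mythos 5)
Your proof is correct and follows essentially the same route as the paper: the same application of Theorem~\ref{thm:wrochna} with $k=5$, $l=2$, the same selectors (a 2-set of size at most $2$ when one exists, and $E(f)$ otherwise, bounded via Lemmas~\ref{lm:nonidemp} and~\ref{lm:smallEf}), and the same pigeonhole reduction to same-type pairs. Your handling of the Type~III pair is a slight (and if anything cleaner) variant: instead of arguing that $\alpha_{i,j}^{-1}(E(f_j))$ is a 1-set of $f_i$, you apply Lemma~\ref{lm:linear} to the fiber $\alpha_{i,j}^{-1}(\{y\})$ of a single $y \in E(f_j)$, which avoids having to determine the value of $f_j$ on $E(f_j)$ itself.
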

	
	 \begin{proof}
	        
	        We apply Theorem~\ref{thm:wrochna} with $k=5$ and $l=2$.  We assign to a polymorphism its \emph{type} and define $\sel(f)$ as follows.
	        
	        \begin{itemize}
	            \item Type 1: $f$ has a 2-set $X$ of size at most 2. In this case we set $\sel(f) = X$. 
	            
	            \item Type 2: $f$ has no 2-set of size at most 2. In this case we set $\sel(f) = E(f)$. 
	        \end{itemize}
	        
	        Note that $\sel(f)$ in type 1 is nonempty and type 2 only occurs in case that $f(\emptyset)=0$ (by Lemma~\ref{lm:nonidemp}) and then $E(f)$ has size at most 5 by Lemma~\ref{lm:smallEf}.
        	
            Let $(f_0, \alpha_{0,1}, f_1, \alpha_{0,2}, f_2)$ be a chain of minors consisting of polymorphisms. If both $f_i$ and $f_j$ (where $i<j$) have type 1, then $\sel(f_i)$ and $\alpha_{i,j}^{-1}(\sel(f_j))$ are both 2-sets, so they have a nonempty intersection.
                Otherwise, since $l=2$, the chain contains 2 polymorphisms $f_{i}$, $f_{j}$ of type 2 (where $i< j$). We have $f_i(\emptyset)=f_j(\emptyset)=0$ and $\alpha_{i,j}^{-1}(\sel(f_j))$ is a 1-set of $f_i$.
            By Lemma~\ref{lm:linear}, this 1-set has an odd-sized intersection with $E(f_i) = \sel(f_i)$, in particular $\sel(f_i) \cap \alpha_{i,j}^{-1}(\sel(f_j)) \neq \emptyset$.
        \end{proof}

\section{Conclusion}\label{sec:conclusion}

The investigation of PCSPs over the templates $(\rel A,\rel B)$, with $\rel A$ a Boolean structure consisting of a single ternary symmetric relation, boils down to $\PCSP(\rel{1in3}=\rel{LO}_2,\rel B)$ where $\rel B$ is symmetric. We have classified the computational complexity for all such three-element structures $\rel B$ with the exception of $\rel B = \rel{LO}_3$. The remaining case, and its  generalization to larger domains, is a natural computational problem -- recall the interpretation as a version of  hypergraph coloring from the introduction. We conjecture that all of them are NP-hard.

\begin{conjecture}
For every $2 \leq k < l$, $\PCSP(\rel{LO}_k,\rel{LO}_l)$ is NP-hard.
\end{conjecture}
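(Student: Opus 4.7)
The plan is to attack the conjecture by applying Theorem~\ref{thm:wrochna}, in the spirit of the arguments for $\rel{D}_1^+$ and $\rel{T}_1$. For a polymorphism $f\colon \{0,1\}^n \to \{0,\ldots,l-1\}$ of $(\rel{LO}_k,\rel{LO}_l)$, the defining constraint reads: for every three-block partition $X \cup Y \cup Z = [n]$, if $f(X) = f(Y) = b$, then $f(Z) > b$. A first consequence is that $(l-1)$-sets are pairwise intersecting, and more generally every pair of disjoint $b$-sets ``produces'' a witness $i$-set with $i > b$ in the complement. The goal is to exploit these layered intersection properties to define, for each polymorphism $f$, a bounded-size selector $\sel(f) \subseteq [n]$ for which a chain of minors of some constant length $l$ must force an intersection $\alpha_{i,j}(\sel(f_i)) \cap \sel(f_j) \neq \emptyset$.

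The first step would be to stratify polymorphisms by the maximum value $v^\ast(f) = \max_X f(X)$ and analyse the top layer $\{X : f(X) = v^\ast(f)\}$, which is intersecting. Using a Kneser-type topological bound as in Lemma~\ref{lm:d12}, one would try to extract from this layer a set of size at most some constant $c(l)$; failing that, descend to the $(v^\ast - 1)$-layer, and so on, using the higher-witness property at each level. The selector $\sel(f)$ would then carry a ``type'' recording the level at which a small representative is produced, together with the representative itself; the chain-of-minors argument pigeonholes on the type and uses the fact from Section~\ref{subsec:smug} that preimages of $i$-sets are $i$-sets to force the required intersection.

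An auxiliary issue is handling ``degenerate'' polymorphisms using only values in $\{0,\ldots,k-1\}$; such an $f$ restricts to a polymorphism of $(\rel{LO}_k,\rel{LO}_k)$, which for $k=2$ consists only of projections (by Schaefer) and yields a natural singleton $\sel(f)$, but for $k \geq 3$ would need separate treatment. A naive induction on $l$ with base case $l = k+1$ does not appear to help, since that base case is itself the smallest open instance; instead, one would hope for a uniform topological argument valid for all $l > k$, in which the single Kneser bound on $\rel{KG}_{n,m}$ is replaced by a multi-layer version reflecting the $l-k$ ``degrees of freedom'' above the source.

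The main obstacle, and the reason that even the case $(k,l) = (2,3)$ is left open in this paper, is the simultaneous control of $O(l)$ intersecting layers while keeping $\sel(f)$ uniformly bounded in size. The $\rel{D}_1^+$ and $\rel{T}_1$ proofs each rely on a single global invariant of the polymorphism --- a single Kneser bound, or the mod-$2$ parity extracted via the function $r$ --- and for $\rel{LO}_l$ no analogous single-shot invariant seems available: the rainbow tuples already present in $\rel{T}_1^+$ obstruct the parity approach of Section~\ref{sec:T1}. A plausible route is a stronger topological obstruction, for instance a Borsuk--Ulam-type argument that simultaneously colors several levels of a sphere, possibly along the lines of the common generalization of~\cite{DRS05} and~\cite{KO19} called for in the introduction.
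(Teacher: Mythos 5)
There is no proof of this statement in the paper to compare against: it is stated as a \emph{conjecture}, and its smallest instance, $\PCSP(\rel{LO}_2,\rel{LO}_3)$ (i.e., $\rel B=\rel T_1^+$), is precisely the one case that Theorem~\ref{thm:main} leaves open. Your text correctly recognizes this, but as a proof it has a gap that is total rather than local: no selector $\sel(f)$ is ever actually defined, none of the ``layered intersection'' lemmas is proved beyond the one-line observation that the top layer $\{X: f(X)=\max_Y f(Y)\}$ is intersecting, no bound $c(l)$ on the size of a small representative is established (the Kneser bound of Lemma~\ref{lm:d12} only yields a small set on which $f$ takes \emph{some} repeated value, and controlling \emph{which} value, across a chain of minors, is exactly where the difficulty lies), and no chain length is exhibited for which Theorem~\ref{thm:wrochna} applies. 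You identify the fatal obstruction yourself --- the absence of a single global invariant playing the role of the Kneser bound in Section~\ref{sec:D1plus} or of the parity function $r$ of Lemma~\ref{lm:linear}, which the rainbow tuples of $\rel T_1^+$ destroy --- so what you have written is a research programme consistent with the authors' own closing remarks, not an argument.

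One of your side concerns can be dispatched, though it does not rescue the proposal. Since $\rel{LO}_2\to\rel{LO}_k$ (send $0\mapsto 0$, $1\mapsto 1$), the template $(\rel{LO}_2,\rel{LO}_l)$ is a homomorphic relaxation of $(\rel{LO}_k,\rel{LO}_l)$ for every $2\leq k\leq l$, so NP-hardness of $\PCSP(\rel{LO}_2,\rel{LO}_l)$ would imply the whole conjecture for that $l$; your tacit specialization to polymorphisms $\{0,1\}^n\to\{0,\dots,l-1\}$ and to the $i$-set machinery of Section~\ref{subsec:smug} is therefore without loss of generality, and no ``genuinely $k$-ary'' analysis is needed. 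Likewise, the worry about polymorphisms with range contained in $\{0,1\}$ is not the hard part (such an $f$ is a polymorphism of $(\rel{1in3},\rel{1in3})$, hence a projection, giving a canonical singleton selector); the hard part is everything else. The conjecture remains open, and your proposal does not close it.
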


A possible intermediate step to resolving the smallest unknown case, $\PCSP(\rel{LO}_2$, $\rel{LO}_3)$, is to replace $\rel{LO}_2$ by a 3-element structure in the interval between $\rel{1in3}$ and $\rel{LO}_3$.

For  four-element structures, the remaining cases additionally include the structures in the interval between $\rel{\check{C}}$ and $\rel{\check{C}^+}$. We proved NP-hardness for $\rel{\check{C}}$ and provided evidence suggesting that $\rel{\check{C}^+}$ also gives rise to an NP-hard PCSP:

\begin{conjecture}
$\PCSP(\rel{1in3},\rel{\check{C}^+})$ is NP-hard. 
\end{conjecture}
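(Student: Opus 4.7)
The plan is to follow the template of the NP-hardness proofs in Sections~\ref{sec:D1plus}--\ref{sec:T1}: apply Theorem~\ref{thm:wrochna} by analysing the $i$-set structure of polymorphisms $f:\{0,1\}^n\to\{0,1,2,3\}$ of $(\rel{1in3},\rel{\check{C}^+})$ and assigning to each polymorphism a bounded-size selector $\sel(f)$ that satisfies the required preimage-intersection property in every chain of minors of some constant length $l$.

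First, I would normalise by the cyclic $\mathbb{Z}_4$-automorphism of $\rel{\check{C}^+}$ to assume $f(\emptyset)=0$, which by compatibility of $(\emptyset,\emptyset,[n])$ forces $f([n])=1$ (the associated digraph is the $4$-cycle $0\to 1\to 2\to 3\to 0$). I would then develop a catalogue of union lemmas in the style of Lemma~\ref{lm:unions}: when $X,Y$ are disjoint with $f(X)=f(Y)=i$, the complement is forced to be an $(i+1 \bmod 4)$-set, whereas pairs with $f(X)=i\neq j=f(Y)$ leave substantial freedom thanks to the rainbow tuples. Alongside this, I would exploit the natural $\mathbb{Z}_2$-grading $r:\{0,1,2,3\}\to\mathbb{Z}_2$ identifying opposite values, aiming for a linearity-like statement analogous to Lemma~\ref{lm:linear} that governs how the parity of a ``colour class'' of singletons propagates under disjoint unions.

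Next, I would construct a bounded selector via a two-pronged case split. Either $f$ has a small $2$-set or $3$-set, extracted through a Kneser-graph colouring argument in the spirit of Lemma~\ref{lm:d12} and Theorem~\ref{thm:lovasz}; or $f$ is ``$\mathbb{Z}_2$-linear'' and a structural analysis of its singleton parity distribution, analogous to Lemma~\ref{lm:smallEf}, picks $O(1)$ distinguished coordinates. Preimages of $i$-sets remain $i$-sets along any chain (as noted in Section~\ref{sec:PandNP}), so a pigeonhole argument over a sufficiently long chain $(f_0,\alpha_{0,1},\dots,f_l)$ should force two selectors of the same type whose preimages must intersect, using a forbidden-disjoint-configuration lemma in the spirit of Lemma~\ref{lm:d11} and Lemma~\ref{lm:d2l1}.

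The main obstacle is exactly the rainbow tuples that distinguish $\rel{\check{C}^+}$ from $\rel{\check{C}}$: they permit many disjoint $i$-sets with distinct values, and thereby undermine the ``no two disjoint $i$-sets'' lemmas that underpinned every hardness proof in the paper. Indeed, the only currently available evidence of intractability is the very weak statement of Theorem~\ref{thm:largerdomains} ruling out block-symmetric polymorphisms with blocks of sizes $23$ and $24$. It is therefore plausible that one must invoke Theorem~\ref{thm:wrochna} with a super-constant bound on $|\sel(f)|$, as explicitly allowed in the footnote following that theorem, and perhaps find a common topological-combinatorial generalisation that simultaneously captures the Kneser-graph argument of Section~\ref{sec:D1plus} and the Dinur--Regev--Smyth bounds from~\cite{DRS05}.
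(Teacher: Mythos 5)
The statement you are addressing is left \emph{open} in the paper: it appears only as a conjecture, and the sole evidence offered there is the non-existence of a block-symmetric polymorphism with blocks of sizes $23$ and $24$ (Appendix~\ref{sec:chplus}). So there is no proof to compare against, and, read on its own terms, your proposal is a research plan rather than a proof: the selector $\sel(f)$ is never actually constructed, no bound on its size is established, and the intersection property required by Theorem~\ref{thm:wrochna} is only asserted to be what a pigeonhole argument ``should force.'' You correctly name the central obstruction --- the rainbow tuples --- but you do not overcome it.

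Two concrete places where the plan breaks down. First, the union lemmas do not survive the passage from $\rel{\check{C}}$ to $\rel{\check{C}}^+$. For $\rel{\check{C}}$, compatibility of $\emptyset$, $X\cup Y$ and the $(i+1)$-set $[n]\setminus(X\cup Y)$ pins $f(X\cup Y)$ down to a single value (Lemma~\ref{lm:chUnion}); for $\rel{\check{C}}^+$ the same step only excludes the value $i+1 \bmod 4$, since $(i,i+1,j)\in R$ for every $j\neq i+1$ once the rainbow tuples are added. Consequently the ``add singletons one at a time'' propagation that drives Lemmas~\ref{lm:chLimits}, \ref{lm:d2l1} and \ref{lemma:successor} collapses, and your first prong (a small $2$- or $3$-set plus a forbidden-disjoint-configuration lemma) has no analogue of Lemma~\ref{lm:chUnion}(b) to contradict. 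Second, the proposed $\mathbb{Z}_2$-grading $r$ identifying opposite values ($0\sim 2$, $1\sim 3$) is in fact a homomorphism $\rel{\check{C}}^+\to\rel{NAE}$: every tuple $(i,i,i+1)$ maps to a non-constant parity pattern, and no rainbow tuple can map to a constant because each fibre of $r$ has only two elements. Hence $r\circ f$ is a polymorphism of the \emph{tractable} template $(\rel{1in3},\rel{NAE})$, whose minion contains block-symmetric polymorphisms of arbitrarily high width; any selector computed from the parity data $r\circ f$ alone would have to satisfy the hypothesis of Theorem~\ref{thm:wrochna} for that minion too, which is impossible. So a linearity statement in the style of Lemma~\ref{lm:linear} cannot by itself locate a bounded (or even slowly growing) $\sel(f)$: genuinely four-valued information about $f$ is needed, and the proposal does not say where it would come from. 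Your closing paragraph essentially concedes all of this, which is honest, but it means the argument is incomplete exactly where the paper's authors also got stuck.
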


Negative resolution of this conjecture would also be valuable -- it would require a polynomial-time algorithm that has not yet been used for PCSPs.

Observe that a homomorphism from a 3-uniform hypergraph  to $\rel{\check{C}^+}$ also has a nice interpretation, as a coloring by 4-colors such that if two vertices of an hyperedge receive the same color, the last vertex must receive a color which is one higher (mod 4). Other templates admit a natural interpretation and generalizations as well, e.g., $\rel B = \rel{D}_2$.

Finally, it seems possible that $(\rel{1in3},\rel{NAE})$ and $(\rel{1in3},\rel{T}_2)$ are essentially the only tractable templates for arbitrary domain sizes. We do not feel that we have enough evidence supporting such a conjecture, so we refrain from phrasing it.
\bibstyle{plainurl}
\bibliography{pcsp}

\appendix
\section{$\rel{\check{C}}$}\label{sec:ch}
    
    In this appendix we begin the proof of Theorem \ref{thm:largerdomains} by verifying the first statement: the NP-hardness of $\PCSP(\rel{1in3},\rel{\check{C}})$, where $\rel{\check{C}} = (\{0,1,2,3\},R)$ and $R$ consists of all the permutations of the tuples $(0,0,1)$, $(1,1,2)$, $(2,2,3)$, and $(0,3,3)$.
    
    Before applying Theorem~\ref{thm:wrochna}, we begin by deriving several properties of polymorphisms of the template. Let $f: \{0,1\}^n \to \{0,1,2,3\}$ be a polymorphism of $(\rel{1in3},\rel{\check{C}})$.
    
    \begin{lemma}\label{lm:chForbid} Suppose $f(\emptyset) = i$. 
        
        \begin{itemize}
        
            \item[(a)] $f$ has no $(i+2 \mod 4)$-sets. 
            
            \item[(b)] $f$ has no two disjoint $(i+1 \mod 4)$-sets.
            
        \end{itemize}
        
        \begin{proof}
            Suppose $f(\emptyset) = 0$. If $X \subseteq [n]$ were a 2-set, then by compatibility with $\emptyset$ it would be the case that $[n] \setminus X$ is compatible with a 0-set and a 2-set. There are no such sets, proving item (a) for this value. Furthermore, if $X \subseteq [n]$ and $Y \subseteq [n]$ are both disjoint 1-sets, then by compatibility with $X$ and $Y$, $[n] \setminus (X \cup Y)$ is a 2-set, but there are no such sets, proving item (b) for this value.
            
            The proof for the remaining values of $i$ is similar.
        \end{proof}
    \end{lemma}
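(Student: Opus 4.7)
The plan is to derive both items by direct application of the compatibility condition: $f$ is a polymorphism iff for every three-block partition $X \cup Y \cup Z = [n]$ one has $(f(X),f(Y),f(Z)) \in R$. The whole lemma will follow from two structural observations about $R$ itself, which I would record first before touching polymorphisms at all.

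\smallskip
\noindent\textbf{Observations about $R$.} Inspect the orbits $(0,0,1), (1,1,2), (2,2,3), (0,3,3)$ generating $R$. Reading them as residues modulo $4$, one sees (i) every tuple in $R$ has the form $(j,j,j{+}1 \bmod 4)$ up to permutation, so no tuple simultaneously contains an entry equal to $j$ and an entry equal to $j+2 \bmod 4$; and (ii) the only way a tuple in $R$ can have two entries equal to $j$ is if the third entry equals $j+1 \bmod 4$. These are immediate from inspection.

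\smallskip
\noindent\textbf{Item (a).} Assume $f(\emptyset)=i$ and, for contradiction, that some $X \subseteq [n]$ is an $(i+2 \bmod 4)$-set. Consider the partition $\emptyset \cup X \cup ([n]\setminus X)$: compatibility gives $(i, i{+}2\bmod 4, f([n]\setminus X)) \in R$, which contradicts observation (i).

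\smallskip
\noindent\textbf{Item (b).} Assume $f(\emptyset)=i$ and suppose $X, Y$ are disjoint $(i+1 \bmod 4)$-sets. Partition $[n] = X \cup Y \cup Z$ with $Z = [n]\setminus(X\cup Y)$. Compatibility gives $(i{+}1, i{+}1, f(Z)) \in R$, so observation (ii) forces $f(Z) = i+2 \bmod 4$. But then $Z$ is an $(i+2 \bmod 4)$-set, contradicting item (a).

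\smallskip
\noindent\textbf{On the remaining cases.} The setup is symmetric under cyclic rotation of the four residues, so the argument above for $i=0$ (using the two inspection lemmas stated for all $j$) transfers verbatim to $i=1,2,3$; no separate case analysis is needed. There is no real obstacle: the lemma is essentially a direct consequence of the shape of $R$, and the only care is to phrase the two inspection facts once in a way that covers all $i$ uniformly, so the proof does not degenerate into four nearly-identical paragraphs.
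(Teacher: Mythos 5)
Your proof is correct and follows essentially the same route as the paper: item (a) by applying compatibility to the partition $\emptyset,\ X,\ [n]\setminus X$ and noting no tuple of $R$ contains both $j$ and $j+2 \bmod 4$, and item (b) by observing that two disjoint $(i{+}1)$-sets would force their complement to be an $(i{+}2)$-set, contradicting (a). The only difference is cosmetic: you isolate the two inspection facts about $R$ so the argument covers all $i$ uniformly, where the paper does $i=0$ and declares the other cases similar.
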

    
    Union arguments (see the proof of Lemma~\ref{lm:unions}) give us the following properties.
    
    \begin{lemma}\label{lm:chUnion} Let $X$ and $Y$ be disjoint subsets of $[n]$.
        
        \begin{itemize}
        
            \item[(a)] If $f(\emptyset) = f(X) = f(Y) = i$, then $f(X \cup Y) = i$.
            
            \item[(b)] If $f(\emptyset) = i$ and $f(X) = f(Y) = i+3 \mod 4$, then $f(X \cup Y) = i+1 \mod 4$.
            
        \end{itemize}
    \end{lemma}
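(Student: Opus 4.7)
The plan is to follow the ``union argument'' pattern from Lemma~\ref{lm:unions}, applying compatibility twice to well-chosen three-part partitions of $[n]$. The key structural observation about $\rel{\check{C}}$ is that its relation consists of all permutations of the tuples $(j,j,j+1 \mod 4)$ for $j \in \{0,1,2,3\}$, so each pair of values essentially determines the third coordinate of a tuple of $R$: in particular, if two of the three coordinates equal $j$, the third must be $j+1 \mod 4$, and if the two given coordinates are $j$ and $j+1 \mod 4$ respectively, the third must be $j$.

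For part (a), assume $f(\emptyset) = f(X) = f(Y) = i$ with $X, Y$ disjoint, and set $Z = [n] \setminus (X \cup Y)$. Compatibility on the partition $(X, Y, Z)$ gives $(i, i, f(Z)) \in R$, which by the observation above forces $f(Z) = i+1 \mod 4$. Then compatibility on $(\emptyset, X \cup Y, Z)$ gives $(i, f(X \cup Y), i+1) \in R$, and the only tuple in $R$ containing both an $i$ and an $i+1 \mod 4$ is a permutation of $(i, i, i+1 \mod 4)$, so $f(X \cup Y) = i$.

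For part (b), assume $f(\emptyset) = i$ and $f(X) = f(Y) = i+3 \mod 4$, with $X, Y$ disjoint, and again set $Z = [n] \setminus (X \cup Y)$. Compatibility on $(X, Y, Z)$ gives $(i+3, i+3, f(Z)) \in R$, forcing $f(Z) = (i+3)+1 = i \mod 4$. Compatibility on $(\emptyset, X \cup Y, Z)$ then gives $(i, f(X \cup Y), i) \in R$, and the unique tuple of $R$ with two coordinates equal to $i$ is a permutation of $(i, i, i+1 \mod 4)$, so $f(X \cup Y) = i+1 \mod 4$, as claimed.

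The argument is essentially routine once the structural observation about $R$ is in hand. The only potential pitfall is tracking the successor direction in the $4$-cycle $0 \to 1 \to 2 \to 3 \to 0$ correctly, particularly in part (b), where the ``doubled value'' $i+3$ wraps around so that its successor equals $i$; this is precisely what makes the conclusion $f(X \cup Y) = i+1 \mod 4$ (rather than some other residue) hold.
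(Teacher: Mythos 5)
Your proof is correct and follows exactly the "union argument" the paper invokes (compatibility on the partition $(X,Y,Z)$ to pin down $f([n]\setminus(X\cup Y))$, then compatibility on $(\emptyset, X\cup Y, Z)$ to pin down $f(X\cup Y)$), which is precisely the pattern of Lemma~\ref{lm:unions} that the paper's one-line proof refers to. Your structural observation that every tuple of $R$ is a permutation of $(j,j,j+1 \bmod 4)$, so that any two entries determine the third, is the right justification for each step.
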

    
    Finally, we prove a lemma about small $i$-sets which will facilitate our main argument for this appendix.
    
    \begin{lemma}\label{lm:chLimits} If $f(\emptyset) = i$, and $f$ has no $(i+3 \mod 4)$-set with size at most 2, then there exists a
    %a unique 
    singleton $(i+1 \mod 4)$-set.
    
        \begin{proof}
        
            We will consider the case where $f(\emptyset) = 0$, as proofs for other values of $i$ will be similar. Observe that in this case, $[n]$ is a 1-set by compatibility applied to $\emptyset$ and $\emptyset$. 
            %By Lemma~\ref{lm:chForbid}, any such $y \in [n]$ will be unique. 
            Suppose by way of contradiction that no such $y \in [n]$ exists. It must then be the case that every singleton is a 0-set. However, by adding to $\emptyset$ singletons, one by one, and using item (a) of Lemma~\ref{lm:chUnion}, we get that $[n]$ is a 0-set, a contradiction.
            
        \end{proof}
    \end{lemma}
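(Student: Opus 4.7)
The plan is proof by contradiction via a short union argument, treating $i = 0$ as the representative case (the other three values are symmetric under cyclic rotation of the colors $0,1,2,3$, reflecting the rotational symmetry of the digraph of $\rel{\check{C}}$). First I would establish that $f([n]) = 1$: applying compatibility to the partition $\emptyset \cup \emptyset \cup [n] = [n]$ and inspecting $R$, the only tuples in $R$ whose first two entries are $0$ end in $1$, so $(0, 0, f([n])) \in R$ forces $f([n]) = 1$.

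Next I would pin down what values singletons can take. Suppose for contradiction that no singleton is a 1-set. By Lemma~\ref{lm:chForbid}(a), since $f(\emptyset) = 0$, no set at all is a 2-set, ruling out that value for singletons. By the hypothesis of the lemma, no set of size at most $2$ is a 3-set, so singletons cannot take the value $3$ either. Combined with our contradiction hypothesis, this leaves $0$ as the only option: every singleton $\{x\}$ satisfies $f(\{x\}) = 0$.

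Now I would iterate Lemma~\ref{lm:chUnion}(a). Starting from $f(\emptyset) = 0$ and using that each singleton is a 0-set, I successively union singletons in one at a time, preserving the value $0$ at every step. After $n$ steps this yields $f([n]) = 0$, contradicting $f([n]) = 1$. This contradiction produces the required singleton 1-set.

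The main obstacle is essentially bookkeeping: verifying that the same template really works uniformly for all four starting values $i$. In each case one must check three things: that Lemma~\ref{lm:chForbid}(a) excludes $(i+2 \bmod 4)$-sets entirely; that the hypothesis excludes singleton $(i+3 \bmod 4)$-sets, so that singletons are restricted to values $i$ and $(i+1 \bmod 4)$; and that compatibility on $\emptyset, \emptyset, [n]$ together with the shape of $R$ forces $f([n]) = (i+1 \bmod 4)$, which is what the union argument will ultimately contradict. Since the cyclic structure of $R$ makes each of these checks parallel to the $i=0$ case, no additional ideas are needed.
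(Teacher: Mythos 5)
Your proposal is correct and follows essentially the same route as the paper's proof: the same contradiction via $f([n]) = i+1 \bmod 4$ from compatibility on $\emptyset,\emptyset,[n]$, the same elimination of values for singletons, and the same iterated application of Lemma~\ref{lm:chUnion}(a). You merely spell out the justification that every singleton must be an $i$-set (via Lemma~\ref{lm:chForbid}(a) and the hypothesis), which the paper leaves implicit.
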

    
    Equipped with these lemmata, we can now proceed to our main argument for this appendix.
    
    \begin{theorem}\label{thm:ch} $\PCSP(\rel{1in3},\rel{\check{C}})$ is NP-hard.
        
        \begin{proof}
            We apply Theorem~\ref{thm:wrochna} with $k=2$ and $l=5$.  We assign to a polymorphism with $f(\emptyset) = i$ its \emph{type} and define $\sel(f)$ as follows.
	
	        \begin{itemize}
	            \item Type 1: $f$ has a $(i+3 \mod 4)$-set $X$ of size at most 2. In this case we set $\sel(f) = X$. 
	            \item Type 2: $f$ does not have a $(i+3 \mod 4)$-set of size at most 2 and $f$ has a singleton  $(i+1 \mod 4)$-set $\{x\}$. We set $\sel(f) = \{x\}$.
	        \end{itemize}
	        
	        Lemma~\ref{lm:chLimits} guarantees that every polymorphism is of one of the two types.
	
            Let $(f_0, \alpha_{0,1}, \dots, f_l)$ be a chain of minors consisting of polymorphisms and note that the value at $\emptyset$ is constant throughout the chain. For simplicity, let this value be 0.  If, for some $i<j$, both $f_i$ and $f_j$ have type 2, then $\sel(f_i)$ and $\alpha_{i,j}^{-1}(\sel(f_j))$ are both 1-sets, so they have a nonempty intersection by item (b) of Lemma~\ref{lm:chForbid}.
    
            Otherwise, since $l = 5$, the chain contains four polymorphisms $f_{i_1}$, $f_{i_2}$, $f_{i_3}$, $f_{i_4}$ of type 1 (where $i_1 < i_2 < i_3 < i_4$). Let $X_1 = \sel(f_{i_1})$ and $X_j = \alpha_{i_1,i_j}^{-1}(\sel(f_{i_j}))$ for $j = 2,3,4$. 
            
            These four sets are 3-sets (as preimages of 3-sets). If they are pairwise disjoint, then $X_1 \cup X_2$ and $X_3 \cup X_4$ are disjoint sets, which are 1-sets by item (b) in Lemma~\ref{lm:chUnion}, a contradiction with item (b) of Lemma~\ref{lm:chForbid}.
            
            Therefore, two of these sets, say $X_j$ and $X_{j'}$, have a nonempty intersection. But then $Y := \sel(f_{i_j})$ and $Z := \alpha_{i_j,i_{j'}}^{-1}(\sel(f_{i_j'}))$ also have a nonempty intersection as $X_j = \alpha_{i_1,i_j}^{-1}(Y)$ and $X_{j'} = \alpha_{i_1,i_{j}}^{-1}(Z)$.
        \end{proof}
        
    \end{theorem}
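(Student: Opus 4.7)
The plan is to apply Theorem~\ref{thm:wrochna} with small constants, mimicking the strategy used for $\rel{D}_2^+$ but adapted to the $\mathbb{Z}/4$-symmetry of the 4-cycle. A convenient first reduction is that $\alpha^{-1}(\emptyset) = \emptyset$ for any minor map, so $f(\emptyset)$ is constant along a chain of minors. Because the four possible values of $f(\emptyset)$ are interchanged by cyclic rotation of $\{0,1,2,3\}$ (which is an automorphism of the relation defining $\rel{\check{C}}$), I may assume throughout that $f(\emptyset) = 0$ for every polymorphism in the chain. Under this assumption, Lemma~\ref{lm:chForbid} forbids 2-sets altogether and forbids any two disjoint 1-sets of $f$.

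For the selection function $\sel$ I would split polymorphisms into two types: \emph{Type 1}, where $f$ admits a 3-set $X$ of size at most 2, and \emph{Type 2}, where no such small 3-set exists. In Type~1 I set $\sel(f) := X$; in Type~2, Lemma~\ref{lm:chLimits} provides a singleton 1-set $\{x\}$ and I set $\sel(f) := \{x\}$. This yields $k = 2$ and guarantees that the typology is exhaustive. The selections will be transported backward along a chain via the property that the preimage of an $i$-set under a minor map is again an $i$-set, as discussed in Section~\ref{subsec:smug}.

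For the chain length I would take $l = 5$, so the chain contains six polymorphisms. By pigeonhole, either (a) at least two of them are of Type~2, or (b) at least five (in particular four) are of Type~1. In case (a), for the earliest two Type~2 indices $i < j$, both $\sel(f_i)$ and $\alpha_{i,j}^{-1}(\sel(f_j))$ are 1-sets of $f_i$, and since 1-sets of $f_i$ cannot be disjoint, they intersect. In case (b), pick four Type~1 indices $i_1 < i_2 < i_3 < i_4$ and pull their selected 3-sets back to $f_{i_1}$, obtaining four 3-sets $X_1, \dots, X_4$. If they were pairwise disjoint, Lemma~\ref{lm:chUnion}(b) would turn $X_1 \cup X_2$ and $X_3 \cup X_4$ into two disjoint 1-sets of $f_{i_1}$, contradicting Lemma~\ref{lm:chForbid}(b). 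Hence two of the $X_j$'s intersect, and unwinding the pullback along the chain turns this into the required intersection between $\sel(f_{i_j})$ and $\alpha_{i_j,i_{j'}}^{-1}(\sel(f_{i_{j'}}))$.

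The main delicacy is calibrating $k$ and $l$ so that a combinatorial contradiction is unavoidable: restricting attention to 3-sets alone is too weak, since two disjoint 3-sets are permitted, so one must combine them in pairs via Lemma~\ref{lm:chUnion}(b) and then invoke the non-disjointness of 1-sets. I expect the most careful step to be verifying the Type~1/Type~2 dichotomy under the fixed value $f(\emptyset)=0$, together with quietly invoking the rotational symmetry of the 4-cycle to cover chains where the constant value of $f(\emptyset)$ is not $0$.
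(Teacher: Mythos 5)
Your proposal is correct and follows essentially the same route as the paper: the same choice of $k=2$ and $l=5$, the same two types (small $(i+3 \bmod 4)$-sets versus singleton $(i+1 \bmod 4)$-sets via Lemma~\ref{lm:chLimits}), and the same pigeonhole argument combining four pulled-back 3-sets in pairs via Lemma~\ref{lm:chUnion}(b) to contradict Lemma~\ref{lm:chForbid}(b). The only cosmetic difference is that you invoke the rotational automorphism of $\rel{\check{C}}$ to normalize $f(\emptyset)=0$, where the paper simply carries the value $i$ through the argument.
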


\section{$\rel{\check{C}^+}$}\label{sec:chplus}

    Recall that $\rel{\check{C}^+} = (\{0,1,2,3\},R)$, where $R$ consists of all the permutations of the tuples $(0,0,1)$, $(1,1,2)$, $(2,2,3)$, and $(0,3,3)$, as well as all the ``rainbow'' tuples $(i,j,k)$ such that $|\{i,j,k\}| = 3$. In this appendix we show that there is no block symmetric polymorphism of $(\rel{1in3},\rel{\check{C}^+})$ with two blocks of sizes 23 and 24.
    
        \begin{lemma} If $g: \{0,1\}^{47} \to \{0,1,2,3\}$ is a block symmetric polymorphism of $(\rel{1in3},\rel{\check{C}^+})$ with two blocks of sizes 23 and 24, then there exists a symmetric polymorphism $f: \{0,1\}^{23} \to \{0,1,2,3\}$  of $(\rel{1in3},\rel{\check{C}^+})$.
        \end{lemma}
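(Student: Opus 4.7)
The plan is to restrict $g$ to a fixed pattern in its second block. Since $g$ is block-symmetric with blocks of sizes $23$ and $24$, the value $g(c,y)$ for $c \in \{0,1\}^{23}$ and $y \in \{0,1\}^{24}$ depends only on the Hamming weights $|c|$ and $|y|$. The crucial arithmetic observation is that $24 = 3 \cdot 8$, so I may fix any $e \in \{0,1\}^{24}$ of weight $8$ and, simultaneously, partition $[24]$ into three pairwise disjoint $8$-element subsets whose indicator vectors $e^{(1)}, e^{(2)}, e^{(3)}$ each have weight $8$.

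I would define $f : \{0,1\}^{23} \to \{0,1,2,3\}$ by
\[
f(x_1, \ldots, x_{23}) := g(x_1, \ldots, x_{23}, e_1, \ldots, e_{24}).
\]
Symmetry of $f$ is inherited directly from the symmetry of $g$ on its first block. To check that $f$ is a polymorphism of $(\rel{1in3}, \rel{\check{C}}^+)$, I would take any three columns $c_1, c_2, c_3 \in \{0,1\}^{23}$ coming from a valid $\rel{1in3}$-instance stacked as a $23 \times 3$ matrix; equivalently, $(c_1, c_2, c_3)$ is a partition of $[23]$. The three concatenated tuples $(c_i, e^{(i)}) \in \{0,1\}^{47}$ then form a partition of $[47]$, which constitutes a valid $\rel{1in3}$-input of arity $47$ for $g$. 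Applying the polymorphism property of $g$ yields
\[
\bigl(g(c_1, e^{(1)}), g(c_2, e^{(2)}), g(c_3, e^{(3)})\bigr) \in R,
\]
where $R$ denotes the relation of $\rel{\check{C}}^+$.

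Finally, block symmetry of $g$ together with $|e^{(i)}| = |e| = 8$ for every $i$ gives $g(c_i, e^{(i)}) = g(c_i, e) = f(c_i)$. Hence $(f(c_1), f(c_2), f(c_3)) \in R$, so $f$ is the required symmetric polymorphism of arity $23$. The only nontrivial ingredient is the divisibility $3 \mid 24$, which permits the partition of $[24]$ into three equal-weight pieces; there is no substantive obstacle beyond this arithmetic observation, and in particular no case analysis on the structure of $g$ is needed.
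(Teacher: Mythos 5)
Your proposal is correct and is essentially the paper's own argument: the paper likewise defines $f(X) = g(Y \cup Z)$ with $|Y| = |X|$ in the 23-block and $|Z| = 8$ in the 24-block, relying on exactly the same observation that $24 = 3 \cdot 8$ allows the second block to be split into three weight-8 pieces so that a partition of $[23]$ extends to a partition of $[47]$. Your write-up just spells out the verification that the paper compresses into ``by construction.''
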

            
        \begin{proof}
                
            We will define $f: \{0,1\}^{23} \to \{0,1,2,3\}$ based on the symmetric blocks of $g$, which we name $X_{23}$ and $X_{24}$ in accordance with their sizes. For any $X \subseteq [23]$, we set $f(X) = g(Y \cup Z)$, where $Y \subseteq X_{23}$ with $\vert Y \vert = \vert X \vert$ and $Z \subseteq X_{24}$ with $\vert Z \vert = 8$. By construction, then, $f$ is a symmetric polymorphism of $(\rel{1in3},\rel{\check{C}^+})$.
                
        \end{proof}
    
    \begin{theorem} There is no block symmetric polymorphism of $(\rel{1in3},\rel{\check{C}^+})$ with two blocks of sizes 23 and 24.
    \end{theorem}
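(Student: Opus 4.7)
By the preceding lemma, it suffices to prove that no symmetric polymorphism $f \colon \{0,1\}^{23} \to \{0,1,2,3\}$ of $(\rel{1in3}, \rel{\check{C}^+})$ exists. Suppose for contradiction that such an $f$ exists; being symmetric, it is encoded by a function $\phi \colon \{0,1,\dots,23\} \to \{0,1,2,3\}$ with $\phi(k) = f(X)$ for any $|X|=k$. The polymorphism condition translates to: for every triple $(a,b,c) \in \mathbb{Z}_{\geq 0}^3$ with $a+b+c = 23$, the multiset $\{\phi(a),\phi(b),\phi(c)\}$ must be admissible, i.e., either a rainbow triple or one of $\{0,0,1\}, \{1,1,2\}, \{2,2,3\}, \{0,3,3\}$ (the allowed ``two-equal'' patterns corresponding to the edges of the cycle $0 \to 1 \to 2 \to 3 \to 0$).

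The first step is to exploit the cyclic automorphism $i \mapsto i+1 \pmod 4$ of $\rel{\check{C}^+}$ to reduce to the case $\phi(0) = 0$. The partition $(0,0,23)$ then forces $\phi(23) = 1$. Next, for each $a \in \{0,1,\dots,11\}$, the partition $(a, a, 23-2a)$ has two equal coordinates and so forces $\phi(23-2a) = \phi(a) + 1 \pmod 4$. Chasing this web of equations expresses every odd value of $\phi$ in terms of a handful of even values; writing $x := \phi(6)$ and $y := \phi(8)$, one obtains in particular $\phi(7) = y+1$, $\phi(9) = y+2$, $\phi(5) = y+3$, $\phi(13) = y$, $\phi(11) = x+1$, $\phi(1) = x+2$, and $\phi(21) = x+3$ (all modulo $4$).

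The third step pins down the ``shift'' $x - y$ using the partition $(11, 7, 5)$, whose image is $(x+1, y+1, y+3)$: both ``two-equal'' subcases are forbidden, so the triple must be rainbow, forcing $x - y \in \{1, 3\} \pmod 4$. This splits the analysis into Case A ($x = y+1$) and Case B ($x = y - 1$). In Case A, the partition $(10, 7, 6)$ forces $\phi(10) = y+2$, and then the partitions $(11,10,2)$ and $(10,9,4)$ fix $\phi(2), \phi(3), \phi(4)$, leaving $y \in \{0,1,2,3\}$ as the only free parameter. In Case B, the partition $(10,7,6)$ instead yields $\phi(10) \in \{y, y+2\}$: in the subcase $\phi(10) = y+2$, the pairing constraint coming from $(0, 8, 15)$ forces $y = 3$, while in the subcase $\phi(10) = y$, $y$ again ranges over $\{0,1,2,3\}$.

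Finally, for each of the nine resulting configurations I would exhibit a single partition whose image is a forbidden multiset. For example, Case A with $y = 0$ fails at $(0,2,21)$ with image $\{0,0,3\}$; Case A with $y = 1$ fails at $(0,8,15)$ with image $\{0,1,1\}$; and Case B with $\phi(10) = y = 3$ fails at $(3,7,13)$ with image $\{0,0,3\}$. The main obstacle is simply the bookkeeping of this finite case analysis; I expect the cleanest exposition to be a short table listing the derived values of $\phi$ and the offending partition in each of the nine configurations, from which the desired contradiction is immediate.
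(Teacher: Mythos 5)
Your proposal takes essentially the same route as the paper: reduce via the preceding lemma to a symmetric polymorphism on $23$ variables and then derive a contradiction by a finite chase of compatibility constraints, differing only in bookkeeping (you normalize $\phi(0)=0$ via the cyclic automorphism and branch on $\phi(6)-\phi(8)$ and $\phi(10)$, whereas the paper fixes $f(8)=0$ and branches on $f(6)$). I checked your derived relations and the three explicit terminal cases, and also that each of the six configurations you leave implicit does admit an offending partition (e.g.\ $(0,10,13)$, $(0,11,12)$, $(0,6,17)$, and $(1,7,15)$ cover them), so the argument is correct.
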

        
        \begin{proof}
            
            Suppose by way of contradiction that there is such a polymorphism, say $g: \{0,1\}^{47} \to \{0,1,2,3\}$.
            Let $f: \{0,1\}^{23} \to \{0,1,2,3\}$ be a symmetric polymorphism of $(\rel{1in3},\rel{\check{C}^+})$, guaranteed by the previous lemma. 
            
            Since $f$ is symmetric, we adopt the convention that $f(m)$ is the value of $f(X)$ for any $X \subseteq [n]$ with $\vert X \vert = m$. Assume now that $f(8) = 0$ -- our argument will be constructed such that other choices for the value of $f(8)$ can be carried forward to likewise achieve a contradiction. By compatibility with $f(8)$ and $f(8)$, we have then that $f(7) = 1$, and similarly by compatibility with $f(7)$ and $f(8)$ it must be the case that $f(9) = 2$. Since $f(9) = 2$, by compatibility with $f(9)$ and $f(9)$ it follows that $f(5) = 3$. In turn, by compatibility with $f(5)$ and $f(5)$, we get that $f(13) = 0$. Since $f(8) = f(13) = 0$, it must then be the case by compatibility that $f(2) = 1$. Therefore, since $f(14)$ is compatible with $f(7)=1$ and $f(2)=1$, we get that $f(14) = 2$. Since $f(9) = f(14) = 2$, it follows in turn by compatibility that $f(0) = 3$.
            
            Consider now the possible values of $f(6)$. If $f(6) = 0$, then $f(9) = 1$ by compatibility with $f(6)$ and $f(8)$, but it has already been shown that $f(9) = 2$, a contradiction. If $f(6) = 2$, then $f(8) = 3$ by compatibility with $f(6)$ and $f(9)$, but by our initial assumption, $f(8) = 0$, a contradiction. If $f(6) = 1$, then $f(11) = 2$ by compatibility with $f(6)$ and $f(6)$, and $f(10) = 2$ by compatibility with $f(6)$ and $f(7)=1$. However, it must then be the case by compatibility with $f(10)$ and $f(11)$ that $f(2) = 3$, but it has already been shown that $f(2) = 1$, a contradiction. Finally, if $f(6) = 3$, then $f(11) = 0$ by compatibility with $f(6)$ and $f(6)$. Similarly, by compatibility with $f(5)$ and $f(6)$, we get that $f(12) = 0$. But $f(0) = 3$ and is compatible with $f(11) = f(12) = 0$, which is a contradiction since no permutation of $(0,0,3)$ is in $R$. Therefore, no value of $f(6)$ is possible, and thus no such $g$ exists.
            
        \end{proof}

    This theorem, together with Theorem~\ref{thm:ch}, completes the proof of Theorem~\ref{thm:largerdomains}.

\end{document}